\pdfminorversion=4
\documentclass[12pt]{iopart}


\usepackage{amssymb,amsthm,graphicx} 
\usepackage{hyperref}
\hypersetup{
	colorlinks,
	citecolor=blue,
	filecolor=black,
	linkcolor=blue,
	urlcolor=blue
}

\newcommand{\eqref}[1]{(\ref{#1})}

\newtheorem{theorem}{Theorem}
\newtheorem{proposition}{Proposition}
\newtheorem{corollary}{Corollary}
\newtheorem{lemma}{Lemma}

\theoremstyle{remark}
\newtheorem{remark}{Remark}

\newenvironment{subproof}[1][\proofname]{%
  \begin{proof}[#1]%
}{%
  \end{proof}%
}

\begin{document}

\title[Universality of the fully connected vertex in Laplacian CTQW problems]{Universality of the fully connected vertex in Laplacian continuous-time quantum walk problems}

\author{Luca Razzoli$^{1}$, Paolo Bordone$^{1,2}$, Matteo G. A. Paris$^{3,4}$}

\address{$^{1}$Dipartimento di Scienze Fisiche, Informatiche e Matematiche, Universit\`{a} di Modena e Reggio Emilia, I-41125 Modena, Italy}

\address{$^{2}$Centro S3, CNR-Istituto di Nanoscienze, I-41125 Modena, Italy}

\address{$^{3}$Quantum Technology Lab, Dipartimento di Fisica {\em Aldo Pontremoli}, Universit\`{a} degli Studi di Milano, I-20133 Milano, Italy}
\address{$^{4}$INFN, Sezione di Milano, I-20133 Milano, Italy}

\eads{\mailto{luca.razzoli@unimore.it},
\mailto{paolo.bordone@unimore.it}, and 
\mailto{matteo.paris@fisica.unimi.it}}

\vspace{10pt}
\begin{indented}
\item[]7 May 2022
\end{indented}

\begin{abstract}
A fully connected vertex $w$ in a simple graph $G$ of order $N$ is a vertex connected to all the other $N-1$ vertices. Upon denoting by $L$ the Laplacian matrix of the graph, we prove that the continuous-time quantum walk (CTQW)---with Hamiltonian $H=\gamma L$---of a walker initially localized at $\vert w \rangle$ does not depend on the graph $G$. We also prove that for any Grover-like CTQW---with Hamiltonian $H=\gamma L +\sum_w \lambda_w \vert w \rangle\langle w \vert$---the probability amplitude at the fully connected marked vertices $w$ does not depend on $G$. The result does not hold for CTQW with Hamiltonian $H=\gamma A$ (adjacency matrix). We apply our results to spatial search and quantum transport for single and multiple fully connected marked vertices, proving that CTQWs on any graph $G$ inherit the properties already known for the complete graph of the same order, including the optimality of the spatial search. Our results provide a unified framework for several partial results already reported in literature for fully connected vertices, such as the equivalence of CTQW and of spatial search for the central vertex of the star and wheel graph, and any vertex of the complete graph.
\end{abstract}

\vspace{2pc}
\noindent{\it Keywords}: quantum walks, quantum search, Grover search, quantum transport, Laplacian matrix, graphs
%
%
%

\section{Introduction}
A quantum particle propagating on a discrete space, e.g., on a graph, performs a quantum walk, the quantum analog of classical random walk. Quantum walks are a well-established model \cite{portugal2018quantum}, with already existing  physical implementations \cite{wang2013physical}. Continuous-time quantum walks (CTQWs) were introduced in \cite{farhi1998quantum} as a quantum algorithm to traverse decision trees.
In a CTQW the state of the walker evolves continuously in time according to the Schr\"{o}dinger equation under a Hamiltonian which respects the topology of the graph considered. The graph is mathematically represented by the Laplacian matrix $L = D-A$, which encodes the degree $D$ and the adjacency $A$ of the vertices. Hence, the matrices $L$ and $A$ are usually taken as generators of a CTQW. For regular graphs, $A$ and $L$ are equivalent, since all the vertices have the same degree and thus $D$ is proportional to the identity. For irregular graphs, instead, $A$ and $L$ are not equivalent in general, but it is possible to recover the same probability distributions for certain graphs and depending on the initial states \cite{wong2021equivalent}.

CTQWs walks inherit the versatility of application from their classical ancestors, but the peculiar features arising from their quantum nature---e.g., the superposition of the quantum walker in their path---make them suitable candidates not only for modeling physical processes, such as coherent transport in complex networks \cite{mulken2011ctqw_transport} even in biological system \cite{mohseni2008environment}, but also for applications in quantum technologies. Indeed, they are of use in studying perfect state transfer in quantum spin networks \cite{christandl2004perfect,alvir2016perfect}, which are of utmost importance for quantum communication, they can be used to develop quantum algorithms, such as spatial search \cite{childs2004spatial,wong2016laplacian,chakraborty2020optimality} and to solve $K$-SAT problems \cite{campos2021quantum}, and they are universal for quantum computation \cite{childs2009universal,lahini2018quantum}.

A number of works have reported equivalent results for Laplacian CTQWs when the fully connected vertex is involved. By fully connected vertex we mean a vertex which is adjacent (connected) to all the other vertices of the graph, as shown in Figure \ref{fig:graphs}. The dynamics of the central vertex of the star graph and that of any vertex of the complete graph are equivalent, showing periodic perfect revivals and strong localization on the initial vertex \cite{xu2009exact}, even in the presence of a perturbation $\lambda L^2$ \cite{candeloro2020continuous}. The spatial search of a marked vertex on the complete graph or on the star graph, when the target is the central vertex, are equivalent \cite{benedetti2019continuous}, and the same qualitative results are observed even in the presence of weak random telegraph noise \cite{cattaneo2018quantum}. The quantum-classical dynamical distance is a fidelity-based measure introduced to quantify the differences in the dynamics of classical versus quantum walks on a graph. Such distance turns out to be the same for the complete, star, and wheel graphs when the central vertex is assumed as the initial state for the walker \cite{benedetti2020quantum-classical}.

In this paper we prove the universality of the fully connected vertex in Laplacian CTQWs. This means that when the fully connected vertex of a graph is the initial state of the walk, or when it is the marked vertex (target) of a Grover-like CTQWs (those involved in spatial search or quantum transport), results do not depend on the considered graph $G$. In other words, those problems formulated on $G$ of order $N$ and on the complete graph of the same order, $K_N$, are equivalent. The present work thus explains the equivalent results between star, wheel, and complete graphs already observed and reported in literature, generalizing the equivalence to the fully connected vertices of any simple graph.

The paper is organized as follows. In Section \ref{sec:ctqw} we recall the CTQW model. In Section \ref{sec:dimredmet} we 
briefly review the dimensionality reduction method for quantum walks \cite{novo2015systematic}, according to which in Section \ref{sec:equiv_CTQW} we prove the equivalence of the Laplacian CTQW of a walker initially localized at a fully connected vertex in any simple graph. Instead, the corresponding CTQWs generated by the adjacency matrix do depend on the graph chosen. Then, in Section \ref{sec:equiv_CTQW_pbm_single} we prove that the equivalence applies also to Grover-like CTQWs for a single fully connected marked vertex, focusing on spatial search and quantum transport. In Section \ref{sec:equiv_CTQW_pbm_multiple} we generalize the result to the case of multiple marked vertices. 
Finally, we present our concluding remarks in Section \ref{sec:conclusion}.

\begin{figure}[!t]
\begin{indented}\item[]
\includegraphics[width=0.5\textwidth]{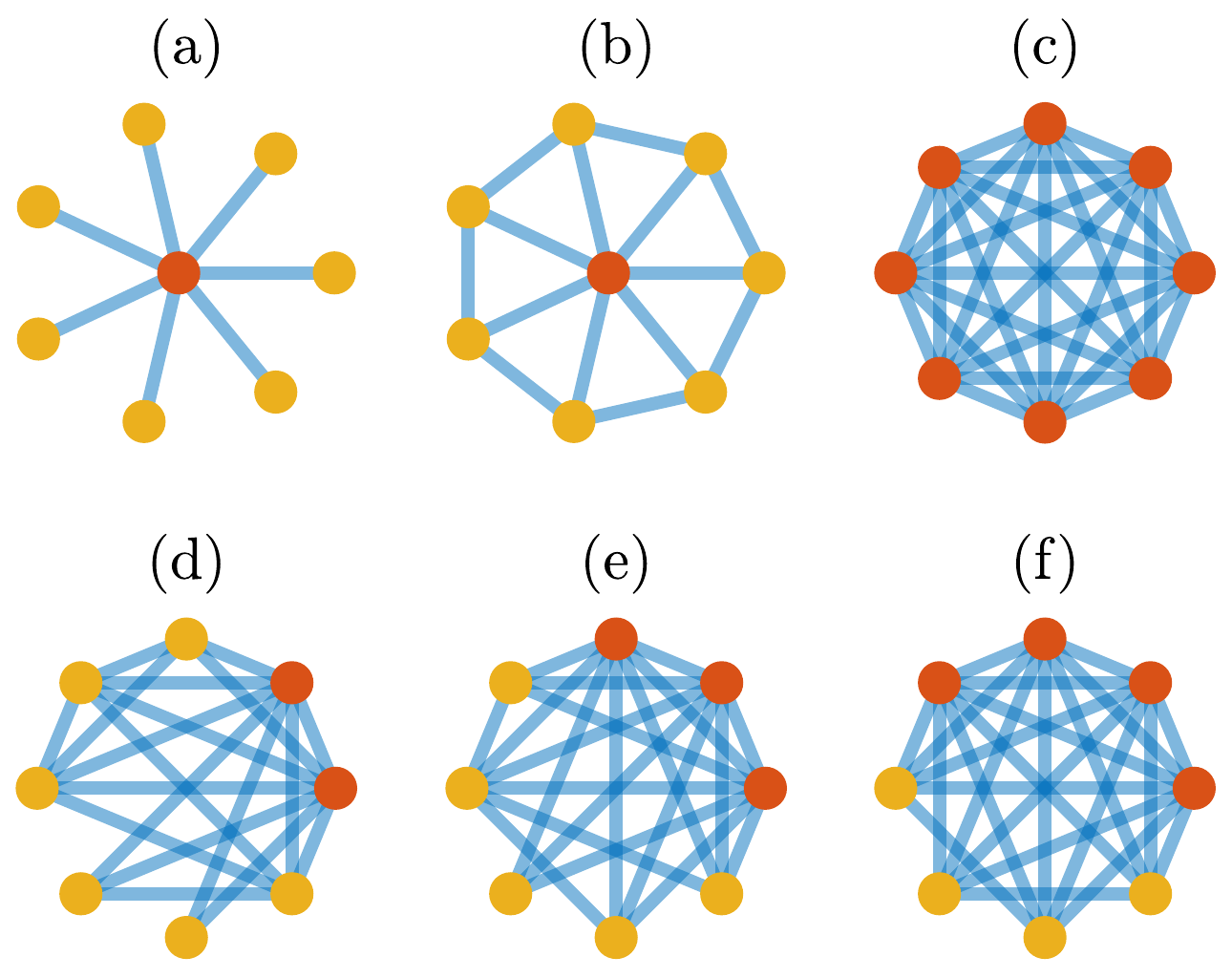}
\caption{Examples of graphs of order $N=8$ with at least one fully connected vertex $w$ (orange colored), $\mathrm{deg}(w)=N-1$. (a) Star graph $S_N$, (b) Wheel graph $W_N$, and (c) Complete graph $K_N$. (d)-(f) Random graphs.}
\label{fig:graphs}
\end{indented}
\end{figure}

\section{Continuous-time quantum walks}
\label{sec:ctqw}
A graph is a pair $G=(V,E)$, where $V$ denotes the non-empty set of vertices and $E$ the set of edges. The order of the graph is the number of vertices, $\vert V \vert =N$. A simple graph is an undirected graph containing no self loops or multiple edges. It is mathematically represented by the Laplacian matrix $L=D-A$, where the adjacency matrix $A$ ($A_{vv'}=1$ if the vertices $v$ and $v'$ are connected, $0$ otherwise) is symmetric and describes the connectivity of $G$ and $D$ is the diagonal degree matrix with $D_{vv}=\mathrm{deg}(v)=:d_v$ the degree of vertex $v$. According to this, $L$ is real, symmetric, positive semidefinite, and singular ($L$ always admits the null eigenvalue because every row sum and column sum of $L$ is zero, thus $\det(L)=0$).\footnote{There are a number of different, all related, definitions of Laplacian of a graph. Sometimes it is useful to normalize the Laplacian matrix $L$ to mitigate the weight of highly connected vertices. Indeed, a large degree results in large diagonal entry, $L_{vv}=d_v$, which dominates the matrix properties because much larger than the off-diagonal entries, $L_{vv'}=0,1$. The two matrices commonly known as normalized graph Laplacians are defined as $\mathcal{L}_{rw}:= D^{-1}L$ (closely related to a random walk) and $\mathcal{L}_{sym}:=D^{-1/2} L D^{-1/2}$ (symmetric matrix), with the convention that $D^{-1}_{vv}=0$ for $d_v = 0$ (i.e., $v$ is an isolated vertex) \cite{chung1997spectral,vonLuxburg2007tutorial}.}

The CTQW is the propagation of a free quantum particle when confined to a discrete space, e.g., a graph. The CTQW on a graph $G$ takes place on a $N$-dimensional Hilbert space $\mathcal{H}=\mathrm{span}(\{\vert v \rangle \mid v \in V\})$, and the kinetic energy term $-\nabla^2/2m$ is replaced by $\gamma L$, where $\hbar=1$ and $\gamma \in \mathbb{R}^+$ is the hopping amplitude of the walk. The state of the walker obeys the Schr\"{o}dinger equation
\begin{equation}
i\frac{d}{dt} \vert \psi(t) \rangle =H\vert \psi(t) \rangle
\label{eq:schrodinger}
\end{equation}
with Hamiltonian $H=\gamma L$. Hence, a walker starting in the state $\vert \psi_0 \rangle \in \mathcal{H}$ continuously evolves in time according to
\begin{equation}
\vert \psi(t) \rangle =U(t)\vert \psi_0 \rangle \,,
\label{eq:psit}
\end{equation}
with $U(t)=\exp[-i H t]$ the unitary time-evolution operator. The probability to find the walker in a target vertex $w$ is therefore $\vert \langle w \vert \exp\left[-i H t\right]\vert \psi_0 \rangle \vert^2$.

\section{Dimensionality reduction method}
\label{sec:dimredmet}
\subsection{Method}
In most CTQW problems encoded on a graph $G$ and a Hamiltonian $H$, the quantity of interest is the probability amplitude at a certain vertex of $G$. The graph often contains symmetries that allow us to simplify the problem, reducing the effective dimensionality of the latter. Indeed, the evolution of the system relevant to the problem actually occurs in a subspace, also known as Krylov subspace \cite{jafarizadeh2007investigation}, of the complete $N$-dimensional Hilbert space $\mathcal{H}$ spanned by the vertices of $G$.  This subspace contains the vertex of interest and it is invariant under the unitary time evolution. As a result, the original graph encoding the problem can be mapped onto an equivalent weighted graph of lower order, whose vertices are the basis states of the invariant subspace. The reduced Hamiltonian, i.e., $H$ written in the basis of the invariant subspace, still fully describes the dynamics relevant to the given problem. We can determine the invariant subspace and its basis by means of the dimensionality reduction method for CTQW \cite{novo2015systematic}, which we briefly review.

The unitary evolution  \eqref{eq:psit} can be expressed as
\begin{equation}
\vert \psi(t) \rangle = \sum_{n=0}^\infty \frac{(-it)^n}{n!}H^n\vert \psi_0 \rangle\,,
\end{equation}
so $\vert \psi(t) \rangle$ is contained in the subspace $\mathcal{I}(H,\vert \psi_0 \rangle) = \mathrm{span}(\lbrace H^n \vert \psi_0 \rangle \mid n \in \mathbb{N}_0\rbrace)$. This subspace of $\mathcal{H}$ is invariant under the action of the Hamiltonian and, thus, also of the unitary evolution. Naturally, $\dim \mathcal{I}(H,\vert \psi_0 \rangle )\leq \dim \mathcal{H}=N$. If the Hamiltonian is highly symmetrical, then only a small number of powers of $H^n\vert \psi_0 \rangle$ are linearly independent, hence the dimension of $\mathcal{I}(H,\vert \psi_0 \rangle)$ can be much smaller than $N$.

Let $P$ be the projector onto $\mathcal{I}(H,\vert \psi_0 \rangle)$. Then
\begin{equation}
U(t) \vert \psi_0 \rangle = e^{-iH_\mathrm{red}t}\vert \psi_0 \rangle\,,
\label{eq:reduced_time_evol}
\end{equation}
where $H_\mathrm{red}=PHP$ is the reduced Hamiltonian. We obtain this using the power series of $U(t)$ and the fact that $P^2=P$ (projector), $P\vert  \psi_0 \rangle =\vert \psi_0 \rangle$, and $P U(t) \vert \psi_0 \rangle = U(t) \vert \psi_0 \rangle$.

For any state $\vert \phi \rangle \in \mathcal{H}$, solution of the CTQW problem, we have
\begin{equation}
\langle \phi \vert U(t) \vert \psi_0 \rangle = \langle \phi_\mathrm{red} \vert e^{-iH_\mathrm{red}t}\vert \psi_0 \rangle \,,
\end{equation}
where $\vert \phi_\mathrm{red} \rangle = P\vert \phi \rangle$ is the reduced state. Analogously, using the projector $P'$ onto the subspace $\mathcal{I}(H,\vert \phi \rangle)$, we obtain
\begin{equation}
\langle \phi \vert U(t) \vert \psi_0 \rangle = \langle \phi \vert e^{-iH_\mathrm{red}'t}\vert {\psi_0}_\mathrm{red} \rangle \,,
\label{eq:prob_ampli_target}
\end{equation}
with $H_\mathrm{red}'=P'HP'$ and $\vert{\psi_0}_\mathrm{red} \rangle=P'\vert \psi_0 \rangle$.

An orthonormal basis of $\mathcal{I}(H,\vert \phi \rangle)$, say $\{\vert e_1 \rangle ,\ldots,\vert e_m \rangle \}$, can be iteratively obtained, as follows: $\vert e_1 \rangle := \vert \phi \rangle$, then $\vert e_{n+1} \rangle$ follows from orthonormalizing $H\vert e_n \rangle$ with respect to the previously obtained basis states, $\{\vert e_k \rangle \}_{k=1,\ldots,n}$, i.e.,
\begin{equation}
\vert u_{n+1} \rangle := H\vert e_n \rangle - \sum_{k=1}^n \langle e_k \vert H\vert e_n \rangle \vert e_k \rangle \quad \Rightarrow \quad \vert e_{n+1} \rangle := \frac{\vert u_{n+1} \rangle} {\Vert \vert u_{n+1} \rangle \Vert}\,.
\end{equation}
The procedure stops when we find the minimum $m$ such that $H\vert e_m \rangle \in \mathrm{span}(\{\vert e_1 \rangle,\ldots,\vert e_m \rangle\})$. The projector onto $\mathcal{I}(H,\vert \phi \rangle)$ is therefore $P'=\sum_{n=1}^m \vert e_n \rangle \langle e_n \vert$.

\subsection{Complete Graph}
\label{subsec:complete_graph}
As an example, we review the well-known reduced problem of the CTQW on the complete graph on $N$ vertices, $K_N$, when generated by the Laplacian matrix or by the adjacency matrix. Each pair of vertices is connected by an edge, so any vertex is fully connected and has degree $N-1$. The adjacency matrix is $(A_{K})_{vv'} = 1$ $\forall v \neq v'$, the diagonal degree matrix is $D_K = (N-1)I$, where $I$ is the identity operator, and the Laplacian matrix is $L_K = D_K-A_K$. Suppose we want to study the CTQW of a walker initially localized at a certain vertex $w$ or, alternatively, for walker starting from any other initial state, to compute the probability amplitude at $w$. The invariant subspace relevant to problem is
\begin{eqnarray}
\mathcal{I}(L_K,\vert w \rangle) &= \mathcal{I}(A_K,\vert w \rangle)\nonumber\\
&= \mathrm{span}\left(\left\lbrace \vert e_1 \rangle = \vert w \rangle ,\vert e_2 \rangle = \frac{1}{\sqrt{N-1}}\sum\nolimits_{v\neq w} \vert v \rangle\right\rbrace\right)\,.
\label{eq:basis_inv_subsp_K}
\end{eqnarray}
Writing $L_K$ and $A_K$ in this subspace, we find, respectively, the reduced Laplacian matrix
\begin{equation}
L_{K,\mathrm{red}} =
\left(\begin{array}{cc}
N-1 & -\sqrt{N-1}\\
-\sqrt{N-1} & 1
\end{array}\right)\,,
\label{eq:LK_red_gen}
\end{equation}
and the reduced adjacency matrix \cite{novo2015systematic}
\begin{equation}
A_{K,\mathrm{red}} =
\left(\begin{array}{cc}
0 & \sqrt{N-1}\\
\sqrt{N-1} & N-2
\end{array}\right)\,.
\label{eq:AK_red_gen}
\end{equation}
It is worth noticing that, consistently with $L_K=D_K-A_K$, we have $L_{K,\mathrm{red}}=D_{K,\mathrm{red}}-A_{K,\mathrm{red}}$, since $D_K$ written in the basis \eqref{eq:basis_inv_subsp_K} is $D_{K,\mathrm{red}}=(N-1)I_{2\times2}$.

The steps required to obtain the orthonormal basis \eqref{eq:basis_inv_subsp_K}, the reduced Laplacian matrix \eqref{eq:LK_red_gen}, and the reduced adjacency matrix \eqref{eq:AK_red_gen} for the complete graph are the same as those presented, in a more general case, in the proofs of Theorem \ref{th:Lapl} and Proposition \ref{prop:Adj}, to which we refer the reader for details.

\section{Universality of a CTQW starting from a fully connected vertex}
\label{sec:equiv_CTQW}
In this section we discuss the CTQW generated either by the Laplacian matrix, $H=\gamma L$, or by the adjacency matrix, $H=\gamma A$. The hopping amplitude $\gamma$ plays the role of a time scaling factor in the time-evolution operator $\exp[-i L \gamma t]$ or $\exp[-i A \gamma t]$. Therefore, in the following we set $\gamma = 1$ so that, together with $\hbar = 1$, time and energy are dimensionless.

\subsection{Laplacian CTQW}
We will refer to the CTQW generated by the Laplacian matrix $L$ as a Laplacian CTQW.
\begin{theorem}
\label{th:Lapl}
Let $G=(V,E)$ be a simple graph on $N=\vert V \vert$ vertices and $M=\vert E \vert$ edges, with Laplacian matrix $L_G=D-A$. Let $w\in V$ be a fully connected vertex of $G$, with degree $d_w=N-1$. Then, the time-evolution of $\vert w \rangle$ under the Laplacian matrix is
\begin{equation}
    e^{-i L_G t}\vert w \rangle = e^{-i L_{G,\mathrm{red}} t}\vert w \rangle \,,
    \label{eq:L_thesis_evol}
\end{equation}
is entirely contained in the invariant subspace
\begin{equation}
\mathcal{I}(L_G,\vert w \rangle) = \mathrm{span}\left(\left\lbrace \vert e_1 \rangle = \vert w \rangle ,\vert e_2 \rangle = \frac{1}{\sqrt{N-1}}\sum\nolimits_{v\neq w} \vert v \rangle\right\rbrace\right)\,,
\label{eq:basis_inv_subsp_G}
\end{equation}
and is generated by the reduced Laplacian matrix
\begin{equation}
L_{G,\mathrm{red}} =
\left(\begin{array}{cc}
N-1 & -\sqrt{N-1}\\
-\sqrt{N-1} & 1
\end{array}\right)\,.
\label{eq:LG_red_gen}
\end{equation}
\end{theorem}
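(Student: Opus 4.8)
The plan is to apply the dimensionality reduction method reviewed in Section \ref{sec:dimredmet} directly to $H=L_G$ with $\vert\psi_0\rangle=\vert w\rangle$, building the Krylov basis of $\mathcal{I}(L_G,\vert w\rangle)$ iteratively and showing it closes after two steps. First I would set $\vert e_1\rangle:=\vert w\rangle$ and compute $L_G\vert w\rangle$: since $w$ is fully connected, its row of the Laplacian has diagonal entry $d_w=N-1$ and a $-1$ in every off-diagonal slot, so $L_G\vert w\rangle=(N-1)\vert w\rangle-\sum_{v\neq w}\vert v\rangle$. Orthonormalizing against $\vert e_1\rangle$ gives $\vert u_2\rangle=L_G\vert w\rangle-\langle w\vert L_G\vert w\rangle\,\vert w\rangle=-\sum_{v\neq w}\vert v\rangle$, which is nonzero (as $N\geq 2$) with norm $\sqrt{N-1}$; fixing the overall sign yields $\vert e_2\rangle=\tfrac{1}{\sqrt{N-1}}\sum_{v\neq w}\vert v\rangle$, exactly the second basis vector in \eqref{eq:basis_inv_subsp_G} (the subspace is of course insensitive to this sign).

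The key step is then to show $L_G\vert e_2\rangle\in\mathrm{span}(\vert e_1\rangle,\vert e_2\rangle)$, so that the iteration terminates at $m=2$. Here I would invoke the defining property that every column of a Laplacian sums to zero, i.e.\ $\sum_{v\in V}L_G\vert v\rangle=0$, which gives $\sum_{v\neq w}L_G\vert v\rangle=-L_G\vert w\rangle$. Hence $L_G\vert e_2\rangle=\tfrac{1}{\sqrt{N-1}}\sum_{v\neq w}L_G\vert v\rangle=-\tfrac{1}{\sqrt{N-1}}L_G\vert w\rangle=-\sqrt{N-1}\,\vert e_1\rangle+\vert e_2\rangle$, which lies in the span. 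This is the heart of the argument and precisely where the Laplacian structure is essential: for the adjacency matrix the columns need not sum to zero unless $G$ is regular, so the analogous closure fails and the reduced problem retains a dependence on $G$ — this is what Proposition \ref{prop:Adj} addresses.

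With the invariant subspace and its orthonormal basis established, it remains to read off $L_{G,\mathrm{red}}=PL_GP$ in the basis $\{\vert e_1\rangle,\vert e_2\rangle\}$ from the two relations already obtained, $L_G\vert e_1\rangle=(N-1)\vert e_1\rangle-\sqrt{N-1}\,\vert e_2\rangle$ and $L_G\vert e_2\rangle=-\sqrt{N-1}\,\vert e_1\rangle+\vert e_2\rangle$. Taking matrix elements $\langle e_i\vert L_G\vert e_j\rangle$ reproduces \eqref{eq:LG_red_gen}, which is manifestly independent of $G$. Finally, \eqref{eq:L_thesis_evol} follows at once from \eqref{eq:reduced_time_evol} applied with $\vert\psi_0\rangle=\vert w\rangle$, since $\vert w\rangle$ belongs to the invariant subspace. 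I do not anticipate a serious obstacle; the only subtlety to watch is that full connectivity of $w$ is exactly what makes $\vert u_2\rangle$ proportional to $\sum_{v\neq w}\vert v\rangle$ (rather than to a sum over a proper subset of neighbors), and that $N\geq 2$ guarantees $\vert u_2\rangle\neq 0$ so the normalization is legitimate.
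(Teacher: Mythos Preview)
Your argument is correct and follows the same overall strategy as the paper---build the Krylov basis via the dimensionality reduction method, show it closes at dimension two, and read off the reduced matrix. The one noteworthy difference is in how you verify $L_G\vert e_2\rangle\in\mathrm{span}(\vert e_1\rangle,\vert e_2\rangle)$. The paper does this by explicit component-wise bookkeeping: it computes $\langle e_2\vert L_G\vert e_2\rangle$ using the degree sum formula (handshaking lemma) and then checks, vertex by vertex, that the residual $(L_G-I)\vert e_2\rangle+\sqrt{N-1}\,\vert e_1\rangle$ vanishes. Your route is cleaner: you invoke directly that $L_G$ annihilates the all-ones vector, i.e.\ $\sum_{v\in V}L_G\vert v\rangle=0$, whence $L_G\vert e_2\rangle=-\tfrac{1}{\sqrt{N-1}}L_G\vert w\rangle$ follows in one line. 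This is the same fact the paper is implicitly using (row sums of $L_G$ vanish), but packaged so that no edge-counting is needed; it also makes transparent why the argument breaks for the adjacency matrix. Both arrive at the identical reduced Laplacian \eqref{eq:LG_red_gen}, so nothing is lost.
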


\begin{remark}
We emphasize that $\dim \mathcal{I} (L_G,\vert w \rangle) = 2 \leq \dim \mathcal{H}=N$ independently of $N$ and of the graph considered. Theorem \ref{th:Lapl} generalizes what already known for the complete graph in Section \ref{subsec:complete_graph}, proving that the CTQW of the fully connected vertex $\vert w \rangle$ is independent of the graph.
\end{remark}

\begin{proof}
Let $\mathcal{H}$ be the $N$-dimensional Hilbert space of a quantum walker on $G$. The time evolution of the state $\vert w \rangle $ generated by $L_G$, $\exp\left[-i L_G t\right] \vert w \rangle $, 
belongs to a subspace of $\mathcal{H}$,
\begin{equation}
\mathcal{I}(L_G,\vert w \rangle ) := \mathrm{span}(\lbrace L_G^n \vert w \rangle  \mid n \in \mathbb{N}_0\rbrace)\,.
\end{equation}
The proof makes use of the dimensionality reduction method (Section \ref{sec:dimredmet}) and consists of two parts. (i) First, we prove Equation \eqref{eq:basis_inv_subsp_G}.
(ii) Second, we prove Equation \eqref{eq:LG_red_gen}. Therefore, if the CTQW of a fully connected vertex $w$ on any graph $G$ satisfy these two conditions, then the statement \eqref{eq:L_thesis_evol} follows from Equation \eqref{eq:reduced_time_evol}.

(i) The first basis state is $\vert e_1 \rangle =\vert w \rangle $. Then we consider
\begin{equation}
L_G \vert e_1 \rangle  = (N-1)\vert w \rangle -\sum_{v \neq w} \vert v \rangle =:  (N-1)\vert e_1 \rangle -\sqrt{N-1}\vert e_2 \rangle \,,
\label{eq:Th1_LGe1}
\end{equation}
where we have used the fact that $w$ is adjacent to all the other vertices, $d_w=N-1$. The basis state $\vert e_2 \rangle$ follows from orthonormalizing $L_G \vert e_1 \rangle$ with respect to the previous basis state, $\vert e_1 \rangle$.

To find the next basis state, we compute $L_G \vert e_2 \rangle $ and then we orthonormalize it with respect to the previous basis states. To compute the projections $\langle e_n \vert L_G \vert e_2 \rangle $, with $n=1,2$, it is convenient to use the definition of Laplacian matrix. From Equation \eqref{eq:Th1_LGe1} we have that
\begin{equation}
\langle e_1 \vert L_G\vert e_2 \rangle  = -\sqrt{N-1}\,,
\end{equation}
and 
\begin{eqnarray}
\langle e_2 \vert L_G\vert e_2 \rangle  &= \frac{1}{N-1}\sum_{v,v' \neq w} (D_{vv'}-A_{vv'})
= \frac{1}{N-1}\left[ \sum_{v\neq w} d_{v}-(2M-2d_w)\right]\nonumber\\
&= \frac{1}{N-1}\left[ 2M-(2M-d_w)\right]=1\,,
\end{eqnarray}
because $D$ is diagonal, $D_{vv'}=0$ for $v \neq v'$, and 
\begin{eqnarray}
\sum_{v,v' \neq w} A_{vv'}
&= \sum_{v\in V}\sum_{v' \neq w}A_{vv'}-\sum_{v' \neq w}A_{wv'}
= \sum_{v,v'\in V}A_{vv'}-\sum_{v\in V}A_{vw}-d_w\nonumber\\
&= 2M-2d_w\,,
\label{eq:sum_adj_not_w}
\end{eqnarray}
since $\sum_{v' \neq v}A_{vv'} = \sum_{v' \in V} A_{vv'} = d_v$ in a graph with no self loops (a vertex is not adjacent to itself), as in the present case. Summing all the elements of the adjacency matrix, as well as summing the degrees, means counting the edges twice, $\sum_{v,v' \in V}A_{vv'}=\sum_{v\in V} d_{v}=2M$ with $M$ the number of edges. In graph theory the latter is known as the degree sum formula and it implies the \textit{handshaking lemma}. We can now prove that 
\begin{equation}
L_G \vert e_2 \rangle  =- \sqrt{N-1}\vert e_1 \rangle +\vert e_2 \rangle \,,
\label{eq:LGe1_lincomb}
\end{equation}
therefore that $L_G^n \vert w \rangle  \in \mathrm{span}(\lbrace \vert e_1 \rangle ,\vert e_2 \rangle \rbrace)\forall n \in \mathbb{N}_0$, by showing that 
\begin{equation}
\vert \lambda \rangle :=(L_G-I) \vert e_2 \rangle  + \sqrt{N-1}\vert e_1 \rangle = 0\,,
\label{eq:Le1_ortho}
\end{equation}
where $I$ is the identity. First, we project it onto $\vert w \rangle $
\begin{eqnarray}
\langle w \vert \lambda \rangle
&= \frac{1}{\sqrt{N-1}} \left( \sum_{v \neq w}(d_v-1)\delta_{wv}-\sum_{v \neq w}A_{wv}+N-1\right)\nonumber\\
&= \frac{1}{\sqrt{N-1}} \left( 0-d_w+N-1\right) = 0\,,
\end{eqnarray}
and then we project it onto any other vertex state, $\vert v' \neq w \rangle$,
\begin{eqnarray}
\langle v' \vert \lambda \rangle
&= \frac{1}{\sqrt{N-1}} \left[ \sum_{v \neq w}(d_v-1)\delta_{v'v}-\sum_{v \neq w}A_{v'v}+0\right]\nonumber\\
&= \frac{1}{\sqrt{N-1}} \left[ d_{v'}-1- \left(\sum_{v\in V}A_{v'v}-A_{v'w} \right)\right]\nonumber\\
&= \frac{1}{\sqrt{N-1}} \left[ d_{v'}-1- \left(d_{v'}-1 \right)\right] = 0\,,
\end{eqnarray}
where $A_{v'w}=1$ because $w$ is adjacent to all the other vertices. This proves Equation \eqref{eq:Le1_ortho}, because the $w$-th component and any other component, $v'\neq w$, are null.  The statement \eqref{eq:basis_inv_subsp_G} follows.

(ii) We can easily prove Equation \eqref{eq:LG_red_gen} by taking the matrix elements
\begin{equation}
(L_{G,\mathrm{red}})_{jk}:=\langle e_j \vert L_G \vert e_k \rangle =(L_{G,\mathrm{red}})_{kj}\,,
\end{equation}
with $j,k=1,2$, from (i).

To summarize, the time evolution of the fully connected vertex state $\vert w \rangle$ always belongs to the subspace \eqref{eq:basis_inv_subsp_G} and is fully described by the reduced generator \eqref{eq:LG_red_gen} indipendently of the graph $G$ considered. This proves the statement \eqref{eq:L_thesis_evol}, concluding the proof.
\end{proof}

\begin{corollary}
Let us consider the Laplacian CTQWs on a graph $G_1$ and on a graph $G_2$ both of order $N$ with a fully connected vertex $w$. Let us assume that the initial states are $\vert \psi_{0,G_1} \rangle$ and $\vert \psi_{0,G_2} \rangle$, respectively. Then, the probability amplitude of finding the walker at $w$ is the same, 
$\langle w \vert \exp\left[-i L_{G_1} t\right]\vert \psi_{0,G_1} \rangle  = \langle w \vert \exp\left[-i L_{G_2} t\right]\vert \psi_{0,G_2} \rangle $, provided that the two initial states have the same projection onto the subspace $\mathcal{I}(L_{G_1}, \vert w \rangle)$ \eqref{eq:basis_inv_subsp_G}.
\end{corollary}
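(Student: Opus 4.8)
The plan is to read the corollary as a direct application of the dimensionality reduction formula~\eqref{eq:prob_ampli_target}, taking $\vert w\rangle$ as the \emph{observed} (target) state rather than the initial state, and then to let Theorem~\ref{th:Lapl} supply the fact that the relevant reduced data are graph independent. Before anything else I would fix the identification of the vertex sets: $G_1$ and $G_2$ both have order $N$ and both contain the distinguished fully connected vertex $w$, so I label their vertices by the same set $V$ and regard both walks as taking place on the single Hilbert space $\mathcal{H}=\mathrm{span}(\{\vert v\rangle\mid v\in V\})$. This identification is exactly what makes the hypothesis ``$\vert\psi_{0,G_1}\rangle$ and $\vert\psi_{0,G_2}\rangle$ have the same projection onto $\mathcal{I}(L_{G_1},\vert w\rangle)$'' a meaningful statement.

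Next I would apply Theorem~\ref{th:Lapl} to $G_1$ and to $G_2$ separately. For each it gives the \emph{same} subspace $\mathcal{I}(L_{G_i},\vert w\rangle)=\mathrm{span}(\{\vert e_1\rangle,\vert e_2\rangle\})$ with $\vert e_1\rangle=\vert w\rangle$ and $\vert e_2\rangle=\frac{1}{\sqrt{N-1}}\sum_{v\neq w}\vert v\rangle$, and the \emph{same} reduced Laplacian $L_{G_1,\mathrm{red}}=L_{G_2,\mathrm{red}}=:L_\mathrm{red}$ from~\eqref{eq:LG_red_gen}. Hence the projector $P'=\vert e_1\rangle\langle e_1\vert+\vert e_2\rangle\langle e_2\vert$ onto $\mathcal{I}(L_{G_i},\vert w\rangle)$, the reduced generator $H'_\mathrm{red}=P'L_{G_i}P'=L_\mathrm{red}$, and the reduced observed state $P'\vert w\rangle=\vert w\rangle$ are all common to the two problems.

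Then I would invoke~\eqref{eq:prob_ampli_target} with $\vert\phi\rangle=\vert w\rangle$, obtaining for $i=1,2$
\[
\langle w\vert e^{-iL_{G_i}t}\vert\psi_{0,G_i}\rangle=\langle w\vert e^{-iL_\mathrm{red}t}\,P'\vert\psi_{0,G_i}\rangle\,.
\]
By assumption $P'\vert\psi_{0,G_1}\rangle=P'\vert\psi_{0,G_2}\rangle$, and the operator $e^{-iL_\mathrm{red}t}$ and the bra $\langle w\vert$ appearing on the right-hand side are the same for $i=1$ and $i=2$; therefore the two amplitudes coincide, which is the assertion.

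I do not anticipate a genuine obstacle: the corollary is essentially a repackaging of Theorem~\ref{th:Lapl} through the reduction identity~\eqref{eq:prob_ampli_target}. The only subtlety, already addressed above, is to be explicit that the two walks live on a common $\mathcal{H}$ with a common copy of the subspace $\mathcal{I}(L_G,\vert w\rangle)$, so that ``same projection'' compares the same vectors; once this is granted, the equality of the reduced Laplacians and of the projectors furnished by Theorem~\ref{th:Lapl} immediately forces the equality of the two probability amplitudes, and hence of the probabilities $\vert\cdot\vert^2$.
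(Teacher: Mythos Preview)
Your proof is correct and follows exactly the paper's approach: the paper's own proof simply says that the corollary follows from Equation~\eqref{eq:prob_ampli_target} with $\vert\phi\rangle=\vert w\rangle$ together with Theorem~\ref{th:Lapl}. Your write-up is a more detailed unpacking of that same one-line argument, with the additional (and helpful) remark that one should identify the two Hilbert spaces via a common vertex labeling so that the hypothesis on equal projections is well posed.
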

\begin{proof}
This directly follows from Equation \eqref{eq:prob_ampli_target}, with $\vert \phi \rangle = \vert w \rangle $, and Theorem \ref{th:Lapl}.
\end{proof}

\subsection{Adjacency CTQW}
We will refer to the CTQW generated by the adjacency matrix $A$ as an adjacency CTQW.
\begin{proposition}
\label{prop:Adj}
Let $G=(V,E)$ be a simple graph on $N=\vert V \vert$ vertices and $M=\vert E \vert$ edges, with adjacency matrix $A_G$. Let $w\in V$ be a fully connected vertex of $G$, with degree  $d_w=N-1$. Then, the adjacency CTQW of the state $\vert w \rangle $ does depend on the graph $G$ considered.
\end{proposition}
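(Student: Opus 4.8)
The plan is to rerun the dimensionality reduction method of Section~\ref{sec:dimredmet} for the adjacency matrix $A_G$, in close parallel with the proof of Theorem~\ref{th:Lapl}, and to show that this time the reduced generator --- and even the dimension of the invariant subspace $\mathcal{I}(A_G,\vert w \rangle)$ --- depends on the structure of $G$ rather than on $N$ alone. Starting from $\vert e_1 \rangle = \vert w \rangle$ one still gets $A_G\vert e_1 \rangle = \sum_{v\neq w}\vert v \rangle = \sqrt{N-1}\,\vert e_2 \rangle$ with the same $\vert e_2 \rangle$ as in \eqref{eq:basis_inv_subsp_G}, so the first step is identical to the Laplacian case. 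The discrepancy appears at the next step: I would compute $A_G\vert e_2 \rangle$ and note that its component along $\vert w \rangle$ is $\sqrt{N-1}$ (because $w$ is adjacent to every vertex), while its component along $\vert v' \rangle$, for $v'\neq w$, is $(d_{v'}-1)/\sqrt{N-1}$, using $A_{wv'}=1$. These components coincide for all $v'\neq w$ --- so that $A_G\vert e_2 \rangle \in \mathrm{span}(\vert e_1 \rangle,\vert e_2 \rangle)$ and the reduction closes at dimension two --- if and only if the subgraph induced on $V\setminus\{w\}$ is regular; and even when it does close, the only free matrix element is $\langle e_2 \vert A_G \vert e_2 \rangle = 2M/(N-1)-2$.

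From here either of two observations proves the claim. If the degrees $d_{v'}$, $v'\neq w$, are not all equal, then $\dim\mathcal{I}(A_G,\vert w \rangle)\geq 3$, and one can simply exhibit graphs with a fully connected vertex (e.g.\ among the irregular ones of Figure~\ref{fig:graphs}) for which this dimension, and the reduced Hamiltonian, differ. Alternatively, restricting to graphs in which $V\setminus\{w\}$ is $d$-regular, one has $2M = d_w + (N-1)d = (N-1)(1+d)$, hence $\langle e_2 \vert A_G \vert e_2 \rangle = d-1$, which is not determined by $N$. Either way the reduced dynamics cannot be graph-independent.

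To make this concrete I would contrast the star graph $S_N$ (the leaves have $d_{v'}=1$) with the complete graph $K_N$ (every $v'\neq w$ has $d_{v'}=N-1$), both of which possess a fully connected vertex $w$. For $S_N$ the reduced generator is $A_{S_N,\mathrm{red}}=\sqrt{N-1}\,\sigma_x$, so the return amplitude is $\langle w \vert e^{-iA_{S_N}t}\vert w \rangle = \cos(\sqrt{N-1}\,t)$ and the probability at $w$ vanishes periodically. For $K_N$, writing $A_{K_N}=J-I$ with $J$ the all-ones matrix gives $\langle w \vert e^{-iA_{K_N}t}\vert w \rangle = [(N-1)e^{it}+e^{-i(N-1)t}]/N$, whose modulus is bounded below by $(N-2)/N>0$ for $N\geq3$, so the probability at $w$ never vanishes. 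The two time evolutions are therefore genuinely different, which proves the proposition. There is no deep obstacle here: the computation is routine bookkeeping with the degree sum formula, and the only points needing care are identifying the exact condition (regularity of $V\setminus\{w\}$) under which the reduction still closes at dimension two, and presenting the $S_N$-versus-$K_N$ comparison so that it visibly exhibits two inequivalent dynamics --- the lower bound $(N-2)/N$ on the $K_N$ return amplitude does this cleanly --- rather than two formulas that merely look different.
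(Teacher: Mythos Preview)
Your argument is correct and follows the same dimensionality-reduction strategy as the paper: start from $\vert e_1\rangle=\vert w\rangle$, obtain $\vert e_2\rangle$, and show that $A_G\vert e_2\rangle$ either fails to lie in $\mathrm{span}(\vert e_1\rangle,\vert e_2\rangle)$ or yields a reduced matrix whose $(2,2)$ entry $2M/(N-1)-2$ depends on $M$. The condition you isolate for closure at dimension two---that all $v'\neq w$ have the same degree, i.e.\ that the induced subgraph on $V\setminus\{w\}$ is regular---is exactly the one the paper reaches in its part~(ii).

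The presentations diverge in two places. The paper inserts an additional subcase~(i.b) showing that whenever $G\neq K_N$ has \emph{two or more} fully connected vertices the invariant subspace is strictly larger than two-dimensional; this is extra information but not needed for the proposition itself, and you rightly omit it. Conversely, you finish with an explicit $S_N$-versus-$K_N$ comparison of return amplitudes, including the clean lower bound $(N-2)/N$ for $K_N$, which demonstrates unambiguously that the \emph{dynamics} differ and not merely the reduced matrices. The paper stops at ``different reduced generators, hence different evolutions''; your explicit contrast is a tighter way to close the argument.
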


\begin{proof}
The proof makes use of the dimensionality reduction method (Section \ref{sec:dimredmet}) and consists of three parts. (i.a) First, we prove that
\begin{equation}
\dim \mathcal{I}(A_G,\vert w \rangle ) \geq 2 = \dim \mathcal{I}(A_K,\vert w \rangle )\,,
\label{eq:A_thesis2}
\end{equation}
where the subscript $K$ refers to the complete graph and, as known, $\mathcal{I}(A_K,\vert w \rangle )$ is \eqref{eq:basis_inv_subsp_K}. This is a first indication that the CTQW of $\vert w \rangle $ generated by $A_G$ and $A_K$ are not equivalent, in general, revealing a first dependence on the graph considered. (i.b) In particular, if the graph $G$ has more than one fully connected vertex and $G \neq K_N$, then $\dim \mathcal{I}(A_G,\vert w \rangle ) > 2$. (ii) Second, we prove that even if $\mathcal{I}(A_G,\vert w \rangle )= \mathcal{I}(A_K,\vert w \rangle )$, the two reduced generators are different, $A_{G,\mathrm{red}}\neq A_{K,\mathrm{red}}$, and thus lead to different time evolutions.

(i.a) The first basis state is $\vert e_1 \rangle =\vert w \rangle $. Then we consider
\begin{equation}
A_G \vert e_1 \rangle  = \sum_{v \neq w} \vert v \rangle =: \sqrt{N-1} \vert e_2 \rangle\,,
\label{eq:Th1_AGe1}
\end{equation}
and $\vert e_2 \rangle$ follows from normalizing $A_G \vert e_1 \rangle$, as the latter is already orthogonal to $ \vert e_1 \rangle$.

To find the next basis state, we compute $A_G \vert e_2 \rangle $ and then we orthonormalize it with respect to the previous basis states. To compute the projections $\langle e_n \vert A_G \vert e_2 \rangle $, with $n=1,2$, it is convenient to use the definition of adjacency matrix. From Equation \eqref{eq:Th1_AGe1} we have that
\begin{equation}
\langle e_1 \vert A_G\vert e_2 \rangle = \sqrt{N-1}\,,
\label{eq:e1AGe2}
\end{equation}
and, using Equation \eqref{eq:sum_adj_not_w},
\begin{equation}
\langle e_2 \vert A_G\vert e_2 \rangle = \frac{1}{N-1}\sum_{v,v' \neq w} A_{vv'}= \frac{1}{N-1}\left(2M-2d_w\right)=\frac{2M}{N-1} -2\,,
\end{equation}
where, we recall, $M$ is the number of edges. We can now study whether or not the state
\begin{equation}
\vert \alpha \rangle := \left[A_G-\left(\frac{2M}{N-1} -2\right)\right] \vert e_2 \rangle  -\sqrt{N-1}\vert e_1 \rangle \,
\label{eq:Ae1_alpha}
\end{equation}
is null. If null, then the invariant subspace has dimension 2, as $A_G \vert e_2 \rangle$ is a linear combination of $\vert e_1 \rangle$ and $\vert e_2 \rangle$, otherwise it has dimension $>2$. First, we project the state \eqref{eq:Ae1_alpha} onto $\vert w \rangle $, observing that $\langle w \vert \alpha \rangle = 0$ from Equation \eqref{eq:e1AGe2}, and then we project it onto any other vertex state, $\vert v' \neq w \rangle$,
\begin{eqnarray}
\langle v' \vert \alpha \rangle
&= \frac{1}{\sqrt{N-1}} \left[ \sum_{v \neq w}A_{v'v}-\left(\frac{2M}{N-1} -2\right)\sum_{v \neq w}\delta_{v'v}\right]\nonumber\\
&= \frac{1}{\sqrt{N-1}} \left[ (d_{v'}-1)-\frac{2M}{N-1} +2\right]\nonumber\\
&= \frac{1}{\sqrt{N-1}} \left( d_{v'}+1-\frac{2M}{N-1} \right)\,,
\label{eq:vAe1_on}
\end{eqnarray}
where $\sum_{v\neq w}A_{v'v}=\sum_{v\in V}A_{v'v}-A_{v'w}=d_{v'}-A_{v'w}$ and $A_{v'w}=1$ because $w$ is adjacent to all the other vertices. We have proved that the $w$-th component is null, but the other components $v'\neq w$ depend on $v'$, so they are not null, in general. According to this, $A_G\vert e_2 \rangle $ is not just a linear combination of $\vert e_1 \rangle $ and $\vert e_2 \rangle $, further basis states are required, and so the statement \eqref{eq:A_thesis2} follows.

(i.b) Let us now assume that there is another fully connected vertex $w' \neq w$, $d_{w'}=N-1$. Then, $\langle w \vert \alpha \rangle = 0$ still holds and Equation \eqref{eq:vAe1_on} for $v' = w'$ reads as
\begin{equation}
\langle w' \vert \alpha \rangle= \frac{1}{(N-1)^{3/2}} \left( N^2-N-2M \right)\,,
\label{eq:wAe1_on_2fully}
\end{equation}
which is null for $N = (1 \pm \sqrt{1+8M})/2$. However, $N\in \mathbb{N}$ requires the solution with the plus sign and $\sqrt{1+8M}=2m+1$, with $m\in\mathbb{N}_0$. Solving the latter condition with respect to $m$ leads to $m = [-1 \pm (2m+1)]/2$. The only acceptable solution is $m=m$, i.e., any positive odd number $2m+1$ can be written as $\sqrt{1+8M}$. The degree sum formula, $\sum_{v \in V} d_v = 2M$, allows us to write
\begin{equation}
N = \frac{1}{2}\left(1 + \sqrt{1+4\sum\nolimits_{v \in V}d_v}\right)\,.
\label{eq:N_ib_general}
\end{equation}
Now we study whether the Equation \eqref{eq:N_ib_general} admits a solution. The presence of fully connected vertex make the graph connected, and $d_v \geq 2$ $\forall v \in V$ since, by assumption, there are at least two fully connected vertices. The graph satisfying the minimal conditions is the graph with two fully connected vertices, $w,w'$ with $d_w=d_{w'}=N-1$, and with all the other $N-2$ vertices connected only to $w$ and $w'$, $d_v =2$ $\forall v \neq w,w'$. Hence, $\sum_{v \in V}d_v = 2(N-1)+(N-2)2$, from which the right-hand side of Equation \eqref{eq:N_ib_general} is
\begin{equation}
f(N) = \frac{1}{2}\left(1 + \sqrt{16N-23}\right)\,.
\label{eq:N_ib_f}
\end{equation}
If we assume that all the vertices are fully connected, then we get the complete graph. Hence, $\sum_{v \in V}d_v = N(N-1)$, from which Equation \eqref{eq:N_ib_general} holds for any $N$. However, we are interested in graphs other than the complete one. There is no graph with only $N-1$ fully connected vertices, as, otherwise, the remaining vertex is necessarily connected to all the others and so the graph is complete. There is, however, the graph with $N-2$ fully connected vertices, obtained by removing one edge from the complete graph. The two non-fully connected vertices thus obtained have degree $N-2$.  Hence, $\sum_{v \in V}d_v = 2(N-2)+(N-2)(N-1)$, from which the right-hand side of Equation \eqref{eq:N_ib_general} is
\begin{equation}
g(N) =  \frac{1}{2}\left(1 + \sqrt{4N^2-4N-7}\right)\,.
\label{eq:N_ib_g}
\end{equation}
All the possible graphs on $N$ vertices having a number $2 \leq \mu \leq N-2$ of fully connected vertices fall within these two cases. In Figure \ref{fig:sol_eq_adj_proof} we study Equation \eqref{eq:N_ib_general}, and we observe that there are no solutions, as none of the right-hand sides, $f(N)$ and $g(N)$, have intersection with the left-hand side, the line $h(N)=N$. We have just proved that, under the assumption of having at least two fully connected vertices and $G \neq K_N$, more than two basis states are required, therefore $\dim \mathcal{I}(A_G,\vert w \rangle ) >2$. Indeed, while the $w$-th component is null, the components corresponding to the other fully connected vertex (or vertices) $w'\neq w$ \eqref{eq:wAe1_on_2fully} are not, thus $A_G\vert e_2 \rangle $ is not just a linear combination of $\vert e_1 \rangle $ and $\vert e_2 \rangle $.

\begin{figure}[!t]
\begin{indented}\item[]
\includegraphics[width=0.5\textwidth]{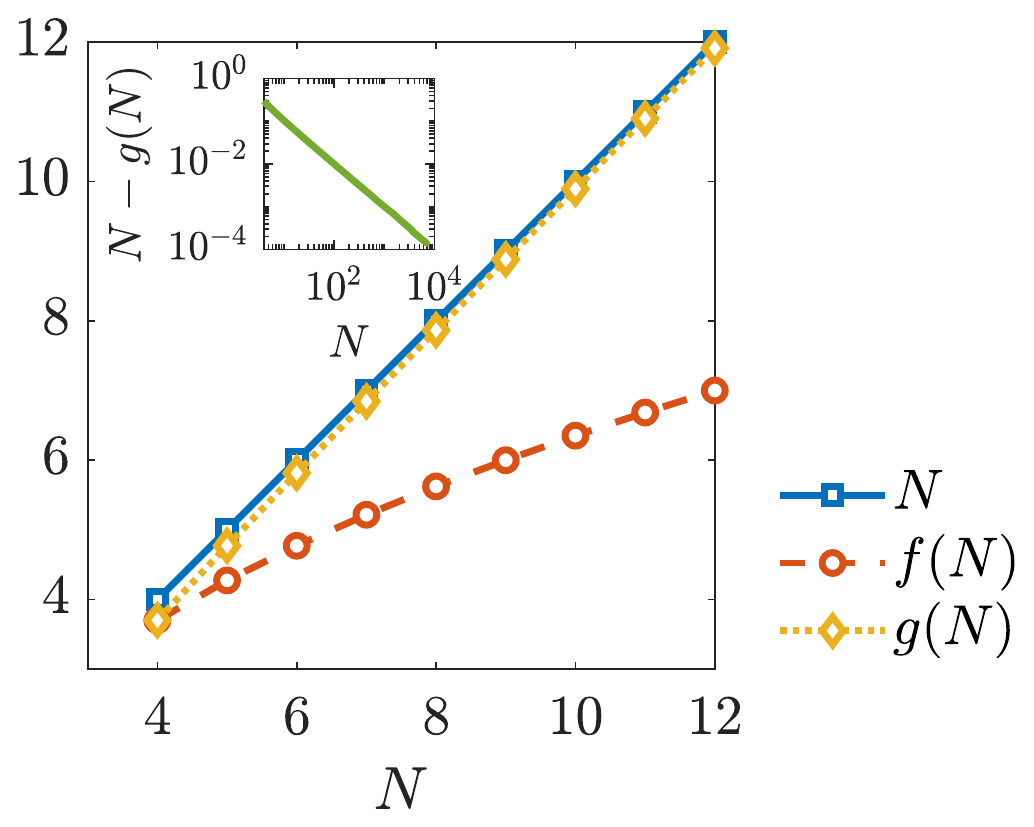}
\caption{Graphical solution of Equation \eqref{eq:N_ib_general}. The left-hand side (LHS) is $N$ (blue solid line, square). The right-hand side (RHS) is $f(N)$ \eqref{eq:N_ib_f} (orange dashed line, circles) or $g(N)$ \eqref{eq:N_ib_g} (yellow dotted line, diamonds). All the possible graphs on $N$ vertices having a number $2 \leq \mu \leq N-2$ of fully connected vertices result in a RHS which falls within these two cases. Results are shown for $N\geq 4$, because the graphs for $N=2,3$ and $\mu =2$ would be the complete graph $K_2,K_3$, respectively. We observe that there are no intersections between the RHS and the LHS, as highlighted in the log-log plot of $N-g(N)$ in the inset. Note that, $g(N)\sim N$ for large $N$, but $g(N)$ never reaches $N$. Therefore, Equation \eqref{eq:N_ib_general} has no solution.}
\label{fig:sol_eq_adj_proof}
\end{indented}
\end{figure}

(ii) Let us now assume that there is only one fully connected vertex, $w$. Then, $\langle w \vert \alpha \rangle = 0$ still holds and Equation \eqref{eq:vAe1_on} reads as
\begin{equation}
\langle v' \vert \alpha \rangle = \frac{1}{(N-1)^{3/2}} \left[(N-1)d_{v'}- \sum_{v \neq w} d_{v} \right]\,,
\end{equation}
since $\sum_{v \in V}d_v = \sum_{v \neq w}d_v +(N-1)=2M$. The above expression is null if $(N-1)d_{v'}= \sum_{v \neq w} d_{v}$ and the latter condition must apply $\forall v'\neq w$ to make the state \eqref{eq:Ae1_alpha} null. Therefore, this condition implies that all the vertices, except $w$, must have the same degree $d_{v'}$. This is the case, e.g., of the star graph [Figure \ref{fig:graphs}(a)] or the wheel graph [Figure \ref{fig:graphs}(b)].
We have just proved that if a simple graph has one fully connected vertex, $w$, and $\mathrm{deg}(v)=d$ $\forall v \in V\setminus \lbrace w \rbrace$, then all the components of the state \eqref{eq:Ae1_alpha} are null. Hence, $A_G^n \vert w \rangle  \in \mathrm{span}(\lbrace \vert e_1 \rangle ,\vert e_2 \rangle \rbrace)\forall n \in \mathbb{N}_0$, because
\begin{equation}
A_G \vert e_2 \rangle  = \sqrt{N-1}\vert e_1 \rangle +\left(\frac{2M}{N-1} -2\right)\vert e_2 \rangle \,,
\end{equation}
and therefore $\mathcal{I}(A_G,\vert w \rangle )=\mathcal{I}(A_K,\vert w \rangle )$. 
From (i) we have the matrix elements
\begin{equation}
(A_{G,\mathrm{red}})_{jk}:=\langle e_j \vert A_G \vert  e_k \rangle = (A_{G,\mathrm{red}})_{kj}\,,
\end{equation}
with $j,k=1,2$. Writing $A_G$ in the basis $\lbrace \vert e_1 \rangle ,\vert e_2 \rangle \rbrace$, we find that
\begin{equation}
A_{G,\mathrm{red}} =
\left( \begin{array}{cc}
0 & \sqrt{N-1}\\
\sqrt{N-1} & \frac{2M}{N-1}-2
\end{array}\right)\,.
\label{eq:AG_red_gen}
\end{equation} 
The reduced generator $A_{G,\mathrm{red}}$  \eqref{eq:AG_red_gen} differs from $A_{K,\mathrm{red}}$  \eqref{eq:AK_red_gen} in the element $(A_{\mathrm{red}})_{22}$. We observe that 
\begin{equation}
\frac{2M}{N-1}-2 = N-2 \Leftrightarrow M = \frac{N(N-1)}{2}\,,
\end{equation}
but only the complete graph has $M = N(N-1)/2$ edges. Moreover, also the star graph and the wheel graph differ in that element, as $M=N-1$ and $M=2(N-1)$, respectively. So, the adjacency CTQW on the graphs which are regular except for the fully connected vertex $w$ are neither equivalent among them, in general, nor to the adjacency CTQW on the complete graph. The reason is that the reduced generators, $A_{G,\mathrm{red}}$ and $A_{K,\mathrm{red}}$, are different, as they depend on the number of edges $M$, and thus they lead to different time evolutions, which, however, belong to the same invariant subspace $\mathcal{I}(A_G,\vert w \rangle )=\mathcal{I}(A_K,\vert w \rangle )$.

To summarize, adjacency CTQWs do depend on the given graph $G$. Considering the adjacency CTQWs of the fully connected vertex state $\vert w \rangle$ either the time evolutions of it belong to different subspaces (see Equation \eqref{eq:A_thesis2}) or, otherwise, the reduced generators are different, as they depend on the number of edges $M$. This proves the Proposition \ref{prop:Adj}, concluding the proof. 
\end{proof}

\section{Grover-like CTQWs with single marked vertex}
\label{sec:equiv_CTQW_pbm_single}
\begin{corollary}
\label{cor:Lctqw_pbm}
Let $w$ be a fully connected marked vertex of a simple graph $G$ of order $N$ with Laplacian matrix $L$. Let us consider the Grover-like CTQW where the quantity of interest is the probability amplitude at $w$. Let
\begin{equation}
H = \gamma L + \lambda \vert w \rangle\langle w \vert
\label{eq:Ham_pbm}
\end{equation}
be the Hamiltonian encoding the problem, where $\gamma \in \mathbb{R}^+$, $\lambda \in \mathbb{C}$, and $H_w := \lambda \vert w \rangle\langle w \vert $ is the oracle Hamiltonian. Then, given the initial state $\vert \psi_0 \rangle$, the probability amplitude at the marked vertex is $\langle w \vert \exp\left[ -i H_{\mathrm{red}} t\right] \vert {\psi_0}_\mathrm{red} \rangle$, where $\vert{\psi_0}_\mathrm{red} \rangle = P\vert \psi_0 \rangle$ with $P$ the projector onto the invariant subspace $\mathcal{I}(H,\vert w \rangle)$ \eqref{eq:basis_inv_subsp_G} relevant to the problem and the reduced Hamiltonian is
\begin{equation}
H_{\mathrm{red}} =\gamma
\left( \begin{array}{cc}
N-1+\lambda/\gamma & -\sqrt{N-1}\\
-\sqrt{N-1} & 1
\end{array} \right)\,.
\label{eq:Ham_pbm_red}
\end{equation}
Grover-like CTQWs on a graph $G_1$ and on a graph $G_2$ both of order $N$ result in the same probability amplitude $\langle w \vert \exp\left[ -i H_{G_1} t\right] \vert \psi_{0,G_1} \rangle=\langle w \vert \exp\left[ -i H_{G_2} t\right] \vert \psi_{0,G_2} \rangle$ provided that $\vert{\psi_{0,G_1}}_\mathrm{red} \rangle = \vert{\psi_{0,G_2}}_\mathrm{red} \rangle$.
\end{corollary}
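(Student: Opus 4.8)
The plan is to reduce the statement to Theorem \ref{th:Lapl} by showing that the oracle term $H_w=\lambda\vert w\rangle\langle w\vert$ does not enlarge the invariant subspace. First I would record the action of $H_w$ on the two candidate basis vectors of \eqref{eq:basis_inv_subsp_G}: since $\langle w\vert e_1\rangle=1$ and $\langle w\vert e_2\rangle=0$, one has $H_w\vert e_1\rangle=\lambda\vert e_1\rangle$ and $H_w\vert e_2\rangle=0$, so $H_w$ maps $\mathrm{span}(\{\vert e_1\rangle,\vert e_2\rangle\})$ into itself. Combining this with Equations \eqref{eq:Th1_LGe1} and \eqref{eq:LGe1_lincomb} from the proof of Theorem \ref{th:Lapl}, both $H\vert e_1\rangle$ and $H\vert e_2\rangle$ lie in $\mathrm{span}(\{\vert e_1\rangle,\vert e_2\rangle\})$; since in addition $\vert e_2\rangle$ is recovered as a linear combination of $\vert e_1\rangle$ and $H\vert e_1\rangle$, this span is exactly $\mathcal{I}(H,\vert w\rangle)$, which therefore coincides with the two-dimensional subspace \eqref{eq:basis_inv_subsp_G}, independently of $G$.

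Next I would compute the reduced Hamiltonian by taking matrix elements $(H_{\mathrm{red}})_{jk}=\langle e_j\vert H\vert e_k\rangle$ for $j,k=1,2$. Because the oracle contributes $\langle e_j\vert H_w\vert e_k\rangle=\lambda\,\delta_{j1}\delta_{k1}$, one gets $H_{\mathrm{red}}=\gamma L_{G,\mathrm{red}}+\lambda\,\mathrm{diag}(1,0)$, which is \eqref{eq:LG_red_gen} with the top-left entry $N-1$ replaced by $N-1+\lambda/\gamma$, i.e.\ \eqref{eq:Ham_pbm_red}. The expression for the probability amplitude then follows from Equation \eqref{eq:prob_ampli_target} with $\vert\phi\rangle=\vert w\rangle$: since $\vert w\rangle=\vert e_1\rangle$ already lies in the invariant subspace, $P\vert w\rangle=\vert w\rangle$, and \eqref{eq:prob_ampli_target} gives $\langle w\vert U(t)\vert\psi_0\rangle=\langle w\vert e^{-iH_{\mathrm{red}}t}\vert{\psi_0}_{\mathrm{red}}\rangle$ with $\vert{\psi_0}_{\mathrm{red}}\rangle=P\vert\psi_0\rangle$.

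The graph-independence statement is then immediate: the basis \eqref{eq:basis_inv_subsp_G} is defined purely in terms of $w$ and $V$, and \eqref{eq:Ham_pbm_red} depends only on $N$, $\gamma$, $\lambda$ and not on the edge structure of $G$. Hence for two graphs $G_1,G_2$ of the same order sharing the fully connected vertex $w$ the reduced Hamiltonians coincide, and chaining the identity above for each graph yields $\langle w\vert e^{-iH_{G_1}t}\vert\psi_{0,G_1}\rangle=\langle w\vert e^{-iH_{\mathrm{red}}t}\vert{\psi_{0,G_1}}_{\mathrm{red}}\rangle=\langle w\vert e^{-iH_{\mathrm{red}}t}\vert{\psi_{0,G_2}}_{\mathrm{red}}\rangle=\langle w\vert e^{-iH_{G_2}t}\vert\psi_{0,G_2}\rangle$ whenever $\vert{\psi_{0,G_1}}_{\mathrm{red}}\rangle=\vert{\psi_{0,G_2}}_{\mathrm{red}}\rangle$.

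I do not expect a serious obstacle here. The one point that requires care is that for $\lambda\in\mathbb{C}$ the Hamiltonian \eqref{eq:Ham_pbm} need not be Hermitian, so $\exp[-iHt]$ need not be unitary; I would therefore note explicitly that the dimensionality reduction method of Section \ref{sec:dimredmet} uses only the power-series expansion of $U(t)$ and the $H$-invariance of $\mathcal{I}(H,\vert w\rangle)$, neither of which relies on unitarity, so Equations \eqref{eq:reduced_time_evol}--\eqref{eq:prob_ampli_target} remain valid verbatim. A secondary bookkeeping point is simply to check that the roles of the projectors in \eqref{eq:prob_ampli_target} are used consistently, which is straightforward since here $\vert\phi\rangle=\vert w\rangle$ and $P$ is the projector onto $\mathcal{I}(H,\vert w\rangle)$.
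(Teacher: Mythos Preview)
Your proposal is correct and follows essentially the same approach as the paper's own proof: both observe that $H_w\vert e_2\rangle=0$ so the oracle does not enlarge the two-dimensional subspace established in Theorem~\ref{th:Lapl}, obtain $H_{\mathrm{red}}$ by adding the oracle contribution $\lambda\,\mathrm{diag}(1,0)$ to $\gamma L_{G,\mathrm{red}}$, and then invoke \eqref{eq:prob_ampli_target} with $\vert\phi\rangle=\vert w\rangle$ for the amplitude statement. Your explicit check that the span is \emph{exactly} $\mathcal{I}(H,\vert w\rangle)$ (not merely contained in it) and your remark that the reduction method relies only on the power series and $H$-invariance---hence survives the non-Hermitian case $\mathrm{Im}(\lambda)\neq 0$---are careful additions that the paper leaves implicit.
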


\begin{proof}
First, we prove that the invariant subspace $\mathcal{I}(H,\vert w \rangle )$  relevant to the problem is \eqref{eq:basis_inv_subsp_G} and then that the reduced Hamiltonian is \eqref{eq:Ham_pbm_red}. The only effective parameter in the Hamiltonian \eqref{eq:Ham_pbm} is the ratio $\lambda/\gamma$. Writing $H = \gamma H'$ we understand that $\gamma$ only determines the timescale of the evolution. Clearly $\mathcal{I}(H,\vert w \rangle )=\mathcal{I}(H',\vert w \rangle )$ and $\vert e_1 \rangle =\vert w \rangle $. The oracle $H_w'=(\lambda/\gamma) \vert w \rangle\langle w \vert $ acts nontrivially only onto $\vert e_1 \rangle $. Therefore, after orthonormalizing $H'\vert e_1 \rangle $ with respect to $\vert e_1 \rangle $, we find the second basis state, $\vert e_2 \rangle $ defined in Equation \eqref{eq:basis_inv_subsp_G}. We observe that $H'\vert e_2 \rangle = L \vert e_2 \rangle$, as $H_w'\vert e_2 \rangle = 0$, thus, according to the proof of Theorem \ref{th:Lapl}, there are no further basis states. Hence, the dynamics relevant to the Grover-like CTQWs for the fully connected vertex belong to the subspace \eqref{eq:basis_inv_subsp_G}. The oracle Hamiltonian $H_w$ has a natural representation in such subspace
\begin{equation}
H_{w,\mathrm{red}} = \lambda \vert e_1 \rangle\langle e_1 \vert  =
\left( \begin{array}{cc}
\lambda & 0\\ 0 & 0
\end{array} \right)\,.
\label{eq:Hw_red}
\end{equation}
The reduced Hamiltonian \eqref{eq:Ham_pbm_red} follows from summing the reduced Laplacian matrix \eqref{eq:LG_red_gen} and the reduced oracle Hamiltonian \eqref{eq:Hw_red}. The remark on the equal probability amplitudes at $w$ depending on the initial state follows from Equation \eqref{eq:prob_ampli_target}, with $\vert \phi \rangle = \vert w \rangle$.
\end{proof}

Grover-like CTQWs of great interest formulated as in the Corollary \ref{cor:Lctqw_pbm} are spatial search \cite{childs2004spatial}, $\lambda = -1$, and quantum transport, $\gamma =1$ and $\lambda = -i \kappa$, with $\kappa \in \mathbb{R^+}$ and $i=\sqrt{-1}$ the imaginary unit \cite{rebentrost2009environment}. In the former, solving the problem amounts to making the walker reach the state $\vert w \rangle $ with the maximum probability starting from the equal superposition of all vertices. In the latter, the quantity of interest is often the transport efficiency, $\eta = 2 \kappa \int_0^{+\infty} \langle w \vert  \rho(t) \vert w \rangle \,dt$, the integrated probability of trapping at the vertex $w$, where $\rho(t)$ is the density matrix of the walker. The transport efficiency can also be read as the complement to $1$ of the probability of surviving within the graph, i.e., $\eta = 1-\Tr\left[ \lim_{t\to+\infty}\rho(t) \right]$ \cite{razzoli2021transport}. We point out that whenever $\mathrm{Im}(\lambda) \neq 0$ the Hamiltonian \eqref{eq:Ham_pbm} is a non-Hermitian effective Hamiltonian that leads to non-unitary dynamics. This is useful to phenomenologically model certain processes like, if $\mathrm{Im}(\lambda)<0$, the dissipative dynamics in quantum optics \cite{plenio1998quantum-jump} or the absorption of an excitation in light harvesting systems \cite{olaya2008efficiency,caruso2009highly}.

\subsection{Spatial search}
\label{subsec:spatial_search_single}
The Hamiltonian encoding the problem is
\begin{equation}
    H = \gamma L - \vert w \rangle \langle w \vert\,,
    \label{eq:Ham_search_1}
\end{equation}
where the marked vertex, target of the search, is the fully connected vertex $w$. Since we have no information about the marked vertex, the initial state is commonly chosen as the equal superposition of all vertices, $\vert \psi_0 \rangle=\sum_{v\in V} \vert v \rangle /\sqrt{N}$. The goal is to tune the hopping amplitude $\gamma$ to maximize the probability amplitude at the marked vertex after a period of time of evolution. The time evolution of $\vert \psi_0 \rangle$ is entirely contained in $\mathcal{I}(H,\vert w \rangle )$, as $\vert \psi_0 \rangle = (\vert e_1 \rangle +\sqrt{N-1}\vert e_2 \rangle )/\sqrt{N}$ and so $\vert {\psi_0}_{\mathrm{red}} \rangle = \vert {\psi_0} \rangle$. Hence, not only the success probability of finding $w$, but also the entire dynamics of the system $\exp\left[-i Ht\right] \vert \psi_0 \rangle$ is the same on any simple graph $G$. 
According to Corollary \ref{cor:Lctqw_pbm}, the results we have for the spatial search on the complete graph, a well-known problem \cite{farhi1998analog,childs2004spatial,benedetti2019continuous}, also apply to the search of $w$ on other graphs. Therefore, if $\gamma =1/N$ (optimal value), then the walker reaches $w$ with probability
\begin{equation}
P_w(t) = \vert \langle w \vert e^{-iHt}\vert \psi_0 \rangle\vert^2 
=\frac{1}{N} \cos^2\left(\frac{t}{\sqrt{N}}\right) +\sin^2\left(\frac{t}{\sqrt{N}}\right) 
\label{eq:prob_search_w}
\end{equation}
equal to one (certainty) at time $t^\ast = \pi \sqrt{N}/2$. 

\subsection{Quantum transport}
\label{subsec:quantum_transport_single}
The non-Hermitian effective Hamiltonian encoding the problem is
\begin{equation}
    H = L - i \kappa \vert w \rangle \langle w \vert\,,
    \label{eq:Ham_trans_1}
\end{equation}
where the trapping vertex is the fully connected vertex $w$ and the trapping rate $\kappa \in \mathbb{R}^+$ ($\lambda = -i\kappa$ in \eqref{eq:Ham_pbm}). We assume that the initial state is localized at a vertex different from $w$, $\vert \psi_0 \rangle=\vert v \neq w \rangle$. Under such assumptions, the transport efficiency of the complete graph is $\eta_{K} = 1/(N-1)$ \cite{caruso2009highly}. Hence, according to Corollary \ref{cor:Lctqw_pbm}, all the graphs whose trap is the fully connected vertex $w$ have $\eta=\eta_K$. This follows from the fact that $\eta$ is the overlap of the initial state with the basis states of the invariant subspace $\mathcal{I}(H,\vert w \rangle )$ \cite{novo2015systematic,razzoli2021transport},
\begin{equation}
\eta = \sum_{n=1,2} \vert \langle e_n \vert \psi_0 \rangle \vert^2 = \vert \langle e_2 \vert \psi_0 \rangle \vert^2 =\frac{1}{N-1}\,,
\end{equation}
and such invariant subspace is \eqref{eq:basis_inv_subsp_G} for the problems and graphs under investigation, including the complete graph.

Alternatively, we can prove this as follows. We define the integrated probability of trapping within the time interval $[0,t]$,
\begin{eqnarray}
\tilde{\eta}(t) = 2 \kappa \int_0^{t} \langle w \vert  \rho(\tau) \vert w \rangle \,d\tau \quad \Rightarrow \quad \lim_{t \to +\infty}\tilde{\eta}(t) = \eta\,,
\end{eqnarray}
where $\langle w \vert \rho(t) \vert w \rangle = \vert \langle w \vert \exp\left[-i H t\right] \vert v \rangle\vert ^2$. From Equation \eqref{eq:prob_ampli_target}, the probability amplitude at $w$,
\begin{eqnarray}
\langle w \vert e^{-iHt}\vert v \rangle &=\langle w \vert e^{-i H_{\mathrm{red}}t}\sum_{n=1,2} \vert e_n \rangle \langle e_n \vert v \rangle = \frac{1}{\sqrt{N-1}}\langle e_1 \vert e^{-i H_{\mathrm{red}}t}\vert e_2 \rangle\,,
\end{eqnarray}
is independent (i) of the graph under investigation and (ii) of the initial vertex state $\vert v \rangle$, provided that $v \neq w$. (i) Follows from the fact that the graphs considered have the same basis states and the same reduced Hamiltonian (Corollary \ref{cor:Lctqw_pbm}). (ii) Follows from the fact that all the vertices other than the trap only overlap with $\vert e_2 \rangle$, which is the equal superposition of them, and have the same overlap with it.
Therefore, $\tilde{\eta}(t)$ does not depend on the graph under investigation or on the initial vertex state. As a result, in the limit of infinite time we also recover the same transport efficiency $\eta = \eta_K$.

In this problem the initial state is a vertex state $\vert v \neq w \rangle$ and cannot be written as linear combination of the two basis states. Therefore, it evolves differently depending on the given graph. Nevertheless, as just shown, it provides the same dynamics relevant to the problem, i.e., the same (trapped) population at $w$. 

\section{Grover-like CTQWs with multiple marked vertices}
\label{sec:equiv_CTQW_pbm_multiple}

\begin{theorem}
\label{th:Lctqw_pbm_multiple}
Let $G=(V,E)$ be a simple graph of order $N=\vert V \vert$ with $M = \vert E \vert$ edges. Let $W := \left\lbrace v \in V \mid \mathrm{deg}(v)=N-1 \wedge \textrm{$v$ is marked}\right\rbrace \neq \emptyset$ be the set of fully connected marked vertices and let $\mu:= \vert W \vert$, with $1 \leq \mu < N$.
 Let us consider a Grover-like CTQW  where the quantities of interest are the probability amplitudes at $w \in W$. Let
\begin{equation}
H = \gamma L + \sum_{w \in W} \lambda_w \vert w \rangle\langle w \vert
\label{eq:Ham_pbm_multiple}
\end{equation}
be the Hamiltonian encoding the problem, where $\gamma \in \mathbb{R}^+$ is constant and $\lambda_w \in \mathbb{C}$ depends on the fully connected vertex.  
Then, given the initial state $\vert \psi_0 \rangle$, the probability amplitude at a marked vertex is $\langle w \vert \exp\left[ -i H_{\mathrm{red}} t\right] \vert {\psi_0}_\mathrm{red} \rangle$, where $\vert{\psi_0}_\mathrm{red} \rangle = P\vert \psi_0 \rangle$ with $P$ the projector onto the $(\mu+1)$-dimensional invariant subspace relevant to the problem,
\begin{equation}
\mathcal{I} = \mathrm{span}\left(\left\lbrace \{\vert e_k \rangle =\vert w_k \rangle \}_k, \vert e_{\mu+1} \rangle =\frac{1}{\sqrt{N-\mu}}\sum\nolimits_{v \notin W} \vert v \rangle \right\rbrace\right)\,,
\label{eq:basis_inv_subsp_multiple}
\end{equation}
with $k = 1,\ldots,\mu$, and the reduced Hamiltonian is
\begin{equation}
H_\mathrm{red} = \gamma
\left( \begin{array}{ccccc}
\Delta_1+\lambda_{w_1}' & -1 	& \cdots &-1 & -\sqrt{\Delta_\mu}\\
-1						& \ddots& \ddots &\vdots & \vdots\\
\vdots & \ddots & \ddots & -1 & \vdots\\
-1					& \cdots & -1& \Delta_1+\lambda_{w_{\mu}}'	& -\sqrt{\Delta_\mu}\\
-\sqrt{\Delta_\mu} 		& \cdots & \cdots & -\sqrt{\Delta_\mu} 	&			 \mu
\end{array} \right)\,,
\label{eq:Ham_pbm_multiple_red}
\end{equation}
where $\Delta_n = N-n$  and $\lambda_{w}'= \lambda_{w}/\gamma$.
Grover-like CTQWs on a graph $G_1$ and on a graph $G_2$ both of order $N$ result in the same probability amplitude $\langle w \vert \exp\left[ -i H_{G_1} t\right] \vert \psi_{0,G_1} \rangle=\langle w \vert \exp\left[ -i H_{G_2} t\right] \vert \psi_{0,G_2} \rangle$ provided that $\vert{\psi_{0,G_1}}_\mathrm{red} \rangle = \vert{\psi_{0,G_2}}_\mathrm{red} \rangle$.
\end{theorem}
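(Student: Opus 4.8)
The plan is to run the dimensionality reduction method of Section \ref{sec:dimredmet}, exactly as in the proofs of Theorem \ref{th:Lapl} and Corollary \ref{cor:Lctqw_pbm}, but now seeding the Krylov construction with the $\mu$ mutually orthogonal vectors $\{\vert w_k\rangle\}_{k=1}^{\mu}$ rather than a single one. First I would absorb $\gamma$: writing $H=\gamma H'$ with $H'=L+\sum_{w\in W}\lambda_w'\vert w\rangle\langle w\vert$ and $\lambda_w'=\lambda_w/\gamma$, the factor $\gamma$ only rescales the timescale and leaves every invariant subspace unchanged, so it suffices to analyze $H'$ and reinstate $\gamma$ at the end. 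Set $\vert e_k\rangle:=\vert w_k\rangle$ for $k=1,\dots,\mu$ and $\vert e_{\mu+1}\rangle:=(N-\mu)^{-1/2}\sum_{v\notin W}\vert v\rangle$, which is well defined since $1\le\mu<N$ makes $V\setminus W$ nonempty.

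The core computation is the action of $H'$ on these states. Using $\deg(w_k)=N-1$ and the disjoint decomposition $V\setminus\{w_k\}=(W\setminus\{w_k\})\cup(V\setminus W)$, one obtains
\[
H'\vert e_k\rangle=(N-1+\lambda_{w_k}')\vert e_k\rangle-\sum_{j\neq k}\vert e_j\rangle-\sqrt{N-\mu}\,\vert e_{\mu+1}\rangle ,
\]
so Gram--Schmidt applied to $\{\vert e_k\rangle\}_{k=1}^{\mu}$ produces exactly one new direction, $\vert e_{\mu+1}\rangle$: the residual of every $H'\vert e_k\rangle$ after removing its $\vert e_j\rangle$-components is the same vector $-\sqrt{N-\mu}\,\vert e_{\mu+1}\rangle$. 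Conversely, since this coefficient is nonzero, $\vert e_{\mu+1}\rangle$ must lie in any $H'$-invariant subspace containing the seeds, so $\mathcal{I}$ in \eqref{eq:basis_inv_subsp_multiple} is the minimal one and has dimension exactly $\mu+1$. To close the induction I would show $H'\vert e_{\mu+1}\rangle\in\mathcal{I}$: the oracle term annihilates $\vert e_{\mu+1}\rangle$ (no support on $W$), hence $H'\vert e_{\mu+1}\rangle=L\vert e_{\mu+1}\rangle$, and I would prove
\[
\vert\lambda\rangle:=L\vert e_{\mu+1}\rangle+\sqrt{N-\mu}\sum_{k=1}^{\mu}\vert e_k\rangle-\mu\,\vert e_{\mu+1}\rangle=0
\]
by projecting onto every vertex state. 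Projecting onto $\vert w_k\rangle$ uses $A_{w_k v}=1$ for all $v\neq w_k$; projecting onto $\vert v'\rangle$ with $v'\notin W$ uses $\sum_{v\in W}A_{v'v}=\mu$ (each fully connected $w$ is adjacent to $v'$), so $\sum_{v\notin W}A_{v'v}=d_{v'}-\mu$ and the $d_{v'}$ terms cancel. This yields $L\vert e_{\mu+1}\rangle=-\sqrt{N-\mu}\sum_k\vert e_k\rangle+\mu\,\vert e_{\mu+1}\rangle$, confirming invariance.

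With both actions in hand, reading off $(H_{\mathrm{red}})_{jk}=\gamma\langle e_j\vert H'\vert e_k\rangle=(H_{\mathrm{red}})_{kj}$ gives \eqref{eq:Ham_pbm_multiple_red} with $\Delta_1=N-1$ and $\Delta_\mu=N-\mu$, and applying \eqref{eq:reduced_time_evol}--\eqref{eq:prob_ampli_target} with $\vert\phi\rangle=\vert w\rangle$ for $w\in W$ gives the stated amplitude $\langle w\vert e^{-iH_{\mathrm{red}}t}\vert{\psi_0}_{\mathrm{red}}\rangle$. Since $\mathcal{I}$, expressed in the basis $\{\vert e_k\rangle\}$, and $H_{\mathrm{red}}$ depend only on $N$, $\mu$, and the $\lambda_w$ and not on $G$, two graphs $G_1,G_2$ of order $N$ with the same marked fully connected vertices produce identical amplitudes whenever the reduced initial states coincide, which is the final claim; the case $\mu=1$ recovers Corollary \ref{cor:Lctqw_pbm}. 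The main obstacle is, as in Theorem \ref{th:Lapl}, the ``no further basis states'' step, i.e.\ the verification $\vert\lambda\rangle=0$: the bookkeeping is heavier here because one must track three kinds of vertices ($w_k$, other marked fully connected $w_j$, and $v\notin W$), and because the Krylov process starts from $\mu$ seeds one must argue carefully that all the residuals collapse onto the single vector $\vert e_{\mu+1}\rangle$ and that no cross-terms among the seeds generate anything new. Everything else reduces term by term to the single-vertex computation.
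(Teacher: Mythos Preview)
Your argument is correct, and it is a genuinely different route from the paper's. The paper explicitly declines to run the Krylov construction directly (``we cannot apply straightforwardly the dimensionality reduction method, because neither the initial state is unique \ldots\ nor the target state is unique''), and instead defines the subspace $\mathcal{E}$ spanned by the Hamiltonian eigenstates that overlap with the marked vertices, proves two auxiliary lemmas about the component structure of those eigenstates (Lemma~\ref{lemma:Eps_notin_sum} on $\vert\varepsilon\notin\mathcal{E}\rangle$ and Lemma~\ref{lemma:v_notin_Eps_in_const} on $\vert\varepsilon\in\mathcal{E}\rangle$), and only then shows $\mathcal{E}=\mathcal{I}$ by mutual inclusion. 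Your approach bypasses the spectral analysis entirely: you seed the Krylov process with the $\mu$ orthogonal vectors $\{\vert w_k\rangle\}$, observe that all residuals collapse onto the single new direction $\vert e_{\mu+1}\rangle$, and close the loop with the direct verification $\vert\lambda\rangle=0$, which is the exact multi-vertex analogue of the step in Theorem~\ref{th:Lapl}. This is shorter and more elementary, and it makes transparent that the paper's stated obstruction is not real---multiple seeds pose no difficulty here because each $H'\vert e_k\rangle$ produces the \emph{same} residual. What the paper's route buys in exchange is structural information about the eigenstates themselves (constant amplitude on $V\setminus W$ for the ``relevant'' ones, zero row-sum for the ``irrelevant'' ones), which is of independent interest, e.g.\ for the transport-efficiency applications; your route never needs to diagonalize anything. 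One minor point worth making explicit in your write-up, since $\lambda_w\in\mathbb{C}$ allows non-Hermitian $H$: the passage from invariance of $\mathcal{I}$ to Equation~\eqref{eq:prob_ampli_target} also requires $\mathcal{I}^\perp$ to be $H$-invariant, which holds here because $L$ is Hermitian (so $\mathcal{I}^\perp$ is $L$-invariant) and the oracle annihilates $\mathcal{I}^\perp$.
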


\begin{remark}
\label{rem:identically_evol_v}
The dimensionality of the problem can be further reduced if subsets of vertices in $W$ have the same $\lambda$, $W_\alpha =\left\lbrace w \in W \mid \lambda_w = \lambda_\alpha \right\rbrace$ such that $\bigcup_{\alpha}W_\alpha = W$ and $W_\alpha \cap W_\beta = \emptyset$ $\forall \alpha \neq \beta$. Instead of having one basis state per marked vertex, the equal superposition of all vertex states from the same set $W_\alpha$ defines one basis state, $\vert e_{W_\alpha} \rangle=\sum_{w \in W_\alpha}\vert w \rangle /\sqrt{\vert W_\alpha\vert}$. This follows from the symmetries of the problem, as they allow to group together identically evolving vertices \cite{wang2021role}. The reduced Hamiltonian \eqref{eq:Ham_pbm_multiple_red} will change according to the new basis.
\end{remark}

\begin{proof}
We have more than one marked vertex and we cannot apply straightforwardly the dimensionality reduction method, because  neither the initial state is unique (except in the spatial search) nor the target state is unique (multiple marked vertices). The Hamiltonian \eqref{eq:Ham_pbm_multiple} inherits the symmetries of the graph (Laplacian matrix), but each oracle Hamiltonian $H_w$ breaks the symmetries involving the corresponding fully connected vertex $w$. 
Here we consider the Hamiltonian in the general framework, with no assumptions on $\lambda$'s.

During the time evolution of the system the population at the marked vertices is determined only by the Hamiltonian eigenstates having nonzero overlap with the marked vertices. Our aim is to prove that the subspace $\mathcal{E}$ spanned by those eigenstates is the subspace $\mathcal{I}$ \eqref{eq:basis_inv_subsp_multiple}. Let us define the subspace
\begin{equation}
    \mathcal{E} := \mathrm{span}\left(\left\lbrace \vert \varepsilon \rangle \mid H \vert \varepsilon \rangle = \varepsilon \vert \varepsilon\rangle \wedge \langle w \in W \vert \varepsilon \rangle\neq 0 \right\rbrace\right)\,,
\end{equation}
where the $\vert \varepsilon \rangle $ are the minimum number of Hamiltonian eigenstates overlapping with the fully connected marked vertices $w \in W$. By mininum we mean that in the case of degenerate eigenspaces more than one eigenstate can have a nonzero overlap with the marked vertices. We can solve this ambiguity by choosing the eigenstate from this degenerate eigenspace which has the maximum possible overlap with the marked vertices and then by orthogonalizing all the other vectors within this eigenspace with respect to it. Therefore, after orthogonalization, the remaining eigenstates in the degenerate space would have zero overlap with the marked vertices. This approach to the problem is explained in \cite{caruso2009highly}, where it provides a simple way to compute the efficiency of transport to a trapping vertex on a graph (in the absence of dephasing and dissipation).

\begin{lemma}
\label{lemma:Eps_notin_sum}
The Hamiltonian eigenstates that do not overlap with the marked vertices have projections onto the vertex states that sum to zero, 
\begin{equation}
\sum_{v} \langle v \vert \varepsilon\notin \mathcal{E} \rangle = \sum_{v\notin W} \langle v \vert \varepsilon\notin \mathcal{E} \rangle =0\,.    
\end{equation}
\end{lemma}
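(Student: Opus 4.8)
The plan is to exploit the fact that an eigenstate with no support on any marked vertex is effectively "blind" to the oracle terms, so that it is really an eigenstate of the Laplacian alone, and then to read off a single row of that eigenvalue equation at a fully connected vertex. First I would make precise what $\vert \varepsilon \rangle \notin \mathcal{E}$ means: after the orthogonalization within degenerate eigenspaces described just above, such a state satisfies $\langle w \vert \varepsilon \rangle = 0$ for every $w \in W$. Consequently the two sums in the statement coincide, $\sum_{v} \langle v \vert \varepsilon \rangle = \sum_{v\notin W} \langle v \vert \varepsilon \rangle$, and it suffices to show the former vanishes.

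Next I would observe that the oracle part of $H$ annihilates $\vert \varepsilon \rangle$, since $\sum_{w\in W}\lambda_w \vert w \rangle\langle w \vert \varepsilon \rangle = 0$ by the previous step. Hence the eigenvalue equation $H\vert \varepsilon \rangle = \varepsilon \vert \varepsilon \rangle$ associated with \eqref{eq:Ham_pbm_multiple} reduces to $\gamma L \vert \varepsilon \rangle = \varepsilon \vert \varepsilon \rangle$, i.e.\ $\vert \varepsilon \rangle$ is an eigenstate of $L$ with eigenvalue $\varepsilon/\gamma$. Equivalently, for any $w$ one has $\langle w \vert H \vert \varepsilon \rangle = \varepsilon \langle w \vert \varepsilon \rangle = 0$ while also $\langle w \vert H \vert \varepsilon \rangle = \gamma \langle w \vert L \vert \varepsilon \rangle + \lambda_w \langle w \vert \varepsilon \rangle = \gamma \langle w \vert L \vert \varepsilon \rangle$, so $\langle w \vert L \vert \varepsilon \rangle = 0$.

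The crux is then a one-line computation. Since $W \neq \emptyset$, pick any $w \in W$ and expand $\langle w \vert L \vert \varepsilon \rangle = \sum_v L_{wv}\langle v \vert \varepsilon \rangle$ using $L_{wv} = d_w\delta_{wv} - A_{wv}$, $d_w = N-1$, $A_{ww}=0$ (no self loops), and $A_{wv}=1$ for all $v \neq w$ because $w$ is fully connected. This gives $0 = \langle w \vert L \vert \varepsilon \rangle = (N-1)\langle w \vert \varepsilon \rangle - \sum_{v\neq w}\langle v \vert \varepsilon \rangle = -\sum_{v\neq w}\langle v \vert \varepsilon \rangle$, where the first term dropped by hypothesis. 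Adding back the null term $\langle w \vert \varepsilon \rangle$ yields $\sum_{v\in V}\langle v \vert \varepsilon \rangle = 0$, and since $\langle w' \vert \varepsilon \rangle = 0$ for all $w' \in W$ this also equals $\sum_{v\notin W}\langle v \vert \varepsilon \rangle$, completing the proof.

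I do not expect a genuine obstacle here; the only subtlety worth care is the degenerate-eigenspace bookkeeping, i.e.\ making sure the eigenstates counted as "non-overlapping" can indeed be chosen with exactly zero overlap on $W$ — but this is precisely what the orthogonalization step recalled before the lemma provides, so once that is granted everything reduces to the elementary row computation above with the Laplacian.
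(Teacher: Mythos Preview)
Your proposal is correct and follows essentially the same approach as the paper: both project the eigenvalue equation onto a fully connected marked vertex $w\in W$, use $\langle w\vert\varepsilon\rangle=0$ to kill the diagonal and oracle contributions, and then invoke $A_{wv}=1$ for all $v\neq w$ to turn the adjacency term into the full sum of components. Your preliminary remark that the oracle part annihilates $\vert\varepsilon\rangle$ is just a slight repackaging of the same row computation the paper performs directly on $H$.
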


\begin{subproof}
We study the eigenproblem $H \vert \varepsilon \rangle = \varepsilon \vert \varepsilon \rangle$ by components in the basis of vertex states, projecting the eigenvalue equation onto a generic $\vert v \rangle$
\begin{eqnarray}
    \langle v \vert H \vert \varepsilon \rangle - \varepsilon \langle v \vert \varepsilon \rangle
    &=\sum_{v'} \left[ \gamma (D_{vv'}-A_{vv'})\right] \langle v' \vert \varepsilon \rangle+\sum_{w \in W }\lambda_w\langle v \vert w \rangle\langle w \vert \varepsilon \rangle- \varepsilon \langle v \vert \varepsilon \rangle\nonumber\\
    &=(\gamma d_v - \varepsilon+\lambda_w \delta_{vw})\langle v \vert \varepsilon \rangle -\gamma \sum_{v'} A_{vv'}\langle v' \vert \varepsilon \rangle= 0\,.
    \label{eq:eigpbm_onto_v}
\end{eqnarray}
Let us focus on $\vert \varepsilon \rangle \notin \mathcal{E}$ and $v \in W$. Then, from Equation \eqref{eq:eigpbm_onto_v}, we have
\begin{equation}
  \sum_{v' \neq v} \langle v' \vert \varepsilon \rangle
  =\sum_{v'} \langle v' \vert \varepsilon \rangle
  =\sum_{v' \notin W} \langle v' \vert \varepsilon \rangle = 0\,,
\end{equation}
as $v\in W$ is fully connected, thus $A_{vv'}=1$ $\forall v \neq v'$ ($A_{vv}=0$). The index of summation can be extended to all the vertices $v' \in V$ or  limited to $v' \notin W$ as $\langle v' \in W \vert \varepsilon \notin \mathcal{E} \rangle=0$ by definition.
\end{subproof}

\begin{lemma}
\label{lemma:v_notin_Eps_in_const}
The Hamiltonian eigenstates that overlap with the marked vertices have constant projection onto the non-marked vertex states, 
\begin{equation}
    \langle v \notin W  \vert \varepsilon \in \mathcal{E}\rangle = \frac{\gamma}{\gamma \mu -\varepsilon}\sum_{v' \in W} \langle v' \vert \varepsilon \rangle=const \quad \forall v \notin W\,.
\end{equation}
\end{lemma}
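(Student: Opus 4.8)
The plan is to read the eigenvalue equation \eqref{eq:eigpbm_onto_v} on the non-marked vertices and turn it into a closed linear system for the vector $\vec x$ with entries $x_v:=\langle v\vert\varepsilon\rangle$, $v\notin W$, sourced only through the single number $S_W:=\sum_{w\in W}\langle w\vert\varepsilon\rangle$. For $v\notin W$ the oracle term $\lambda_w\delta_{vw}$ disappears, and since every $w\in W$ is fully connected we have $A_{vw}=1$, so $\sum_{v'\in W}A_{vv'}=\mu$ and the degree splits as $d_v=\mu+k_v$ with $k_v:=\sum_{v'\notin W}A_{vv'}$ the number of neighbours of $v$ inside $V\setminus W$. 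Substituting into \eqref{eq:eigpbm_onto_v} and regrouping yields
\begin{equation}
(\gamma\mu-\varepsilon)\,x_v+\gamma\sum_{v'\notin W}A_{vv'}(x_v-x_{v'})=\gamma S_W\qquad(v\notin W)\,,
\end{equation}
that is $\bigl[(\gamma\mu-\varepsilon)I+\gamma L'\bigr]\vec x=\gamma S_W\,\mathbf 1$, where $L'$ is the Laplacian matrix of the subgraph of $G$ induced on $V\setminus W$ and $\mathbf 1$ is the all-ones vector.

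First I would sum this identity over all $v\notin W$; the term $\sum_{v,v'\notin W}A_{vv'}(x_v-x_{v'})$ is antisymmetric and cancels (equivalently $\mathbf 1^{\mathsf T}L'=0$), leaving $(\gamma\mu-\varepsilon)\sum_{v\notin W}x_v=\gamma(N-\mu)S_W$, so the mean of the non-marked components is exactly the constant $c:=\gamma S_W/(\gamma\mu-\varepsilon)$ claimed in the lemma. Decomposing $\vec x=c\,\mathbf 1+\vec x_\perp$ with $\mathbf 1^{\mathsf T}\vec x_\perp=0$ and inserting this back, the source cancels because $(\gamma\mu-\varepsilon)c\,\mathbf 1=\gamma S_W\,\mathbf 1$ and $L'\mathbf 1=0$, so $\vec x_\perp$ obeys the homogeneous equation $\bigl[(\gamma\mu-\varepsilon)I+\gamma L'\bigr]\vec x_\perp=0$. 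If $(\varepsilon-\gamma\mu)/\gamma$ is not an eigenvalue of $L'$ this forces $\vec x_\perp=0$, every $x_v$ equals $c$, and the lemma follows at once.

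The remaining, degenerate case is where I expect the actual work to be. If $\vec x_\perp\neq0$ it is an eigenvector of the real symmetric matrix $L'$ with eigenvalue $(\varepsilon-\gamma\mu)/\gamma$; when this eigenvalue is nonzero, $\vec x_\perp$ is orthogonal to $\ker L'\ni\mathbf 1$, and I would check that the state $\vert\phi\rangle$ defined by $\langle v\vert\phi\rangle=(\vec x_\perp)_v$ for $v\notin W$ and $\langle w\vert\phi\rangle=0$ for $w\in W$ is an eigenstate of $H$ with eigenvalue $\varepsilon$ having zero overlap with every marked vertex: its projection onto $\vert w\rangle$ is proportional to $\mathbf 1^{\mathsf T}\vec x_\perp=0$ because $w$ is fully connected, while its projection onto $\vert v\rangle$, $v\notin W$, equals $\varepsilon(\vec x_\perp)_v$ once one uses $d_v=\mu+k_v$ and $L'\vec x_\perp=\frac{\varepsilon-\gamma\mu}{\gamma}\vec x_\perp$. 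Hence $\varepsilon$ labels a degenerate eigenspace, $\vert\phi\rangle\notin\mathcal{E}$, and by the convention fixing the representatives of $\mathcal{E}$ inside degenerate eigenspaces (the orthogonalization recipe recalled before Lemma \ref{lemma:Eps_notin_sum}, cf.\ \cite{caruso2009highly}) one may replace $\vert\varepsilon\rangle$ by $\vert\varepsilon\rangle-\vert\phi\rangle$, which carries the same overlap $S_W$ with $W$ but now has the constant block $c\,\mathbf 1$ on $V\setminus W$. The borderline value $\varepsilon=\gamma\mu$, for which $c=\gamma S_W/(\gamma\mu-\varepsilon)$ is a $0/0$ expression, must be inspected separately: there the summed identity gives $S_W=0$ and $\vec x\in\ker L'$, and one argues that after the same orthogonalization the representative of $\mathcal{E}$ is again constant on $V\setminus W$, the formula being read as the appropriate limit.

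The main obstacle is precisely this degeneracy: an isolated eigenstate overlapping $W$ need not a priori have constant non-marked components, so the lemma cannot be obtained by manipulating \eqref{eq:eigpbm_onto_v} alone and must invoke the freedom in choosing a basis of a degenerate eigenspace — the same device that makes $\mathcal{E}$ well defined in the first place. Everything else, namely the rearrangement into the $L'$-system, the averaging that pins down $c$, and the padding-by-zeros verification, is routine once $L'$ is on the table.
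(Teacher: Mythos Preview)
Your proposal is correct and arrives at the same linear system as the paper, but it proceeds more carefully. The paper writes the non-marked components $x_j=\langle v_j\vert\varepsilon\rangle$ as a system of $N-\mu$ linear equations, makes the ansatz $x_1=\cdots=x_{N-\mu}=\gamma\xi/(\gamma\mu-\varepsilon)$ (motivated by a special ``star-like'' case and by numerical evidence), verifies by direct substitution that this constant vector satisfies every equation, and then declares the lemma proved. Your recasting as $[(\gamma\mu-\varepsilon)I+\gamma L']\vec x=\gamma S_W\mathbf 1$ via the Laplacian $L'$ of the induced subgraph on $V\setminus W$ is an equivalent system, and your check that $c\mathbf 1$ solves it is essentially the same computation as the paper's ansatz verification. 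Where you genuinely differ is in confronting uniqueness: when $(\varepsilon-\gamma\mu)/\gamma$ happens to be an eigenvalue of $L'$ the system is singular and admits solutions $c\mathbf 1+\vec x_\perp$ with $\vec x_\perp\neq0$, a possibility the paper does not discuss. Your resolution --- padding $\vec x_\perp$ by zeros on $W$ to produce an $H$-eigenvector of the same eigenvalue with zero overlap on every $w\in W$, and then invoking the orthogonalization convention that defines $\mathcal E$ --- closes this gap and makes explicit that the lemma actually rests on the choice of representatives inside degenerate eigenspaces. The paper's argument is shorter; yours is more complete and correctly isolates where the definition of $\mathcal E$ is used.
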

\begin{subproof}
From Equation \eqref{eq:eigpbm_onto_v}, the components under investigation are
\begin{equation}
    \langle v \notin W \vert \varepsilon \in \mathcal{E}\rangle = \frac{\gamma \sum_{v'} A_{vv'} \langle v' \vert \varepsilon \rangle}{\gamma d_v-\varepsilon}
    = \frac{\gamma}{\gamma d_v-\varepsilon}\left(\xi + \sum_{v'\notin W} A_{vv'} \langle v' \vert \varepsilon \rangle\right)\,,
   \label{eq:v_notin_eps_in_step0}
\end{equation}
where we have defined $\xi := \sum_{v' \in W} \langle v' \vert \varepsilon \rangle$, which does not depend on the $v \notin W$ chosen, and we have used $A_{vv'}=1$  $\forall v' \in W$. Indeed, $v \notin W$, thus $v \neq v'$, and the vertices $v'$ are the fully connected ones.

Let us start with a particular case. If the vertices $v \notin W$ are only connected to the vertices $w \in W$, then 
$d_v = \mu = \vert W \vert$ $\forall v \notin W$ and $A_{vv'} = 0$ $\forall v,v' \notin W$. Hence, all the components are constant and equal to 
\begin{equation}
    \langle v \notin W  \vert \varepsilon \in \mathcal{E}\rangle = \frac{\gamma \xi}{\gamma \mu -\varepsilon}\quad \forall v \notin W\,.
    \label{eq:v_notin_eps_in_StarLike}
\end{equation}
In general, instead, we have a system of $\bar{\mu}:= N-\mu$ linear equations like \eqref{eq:v_notin_eps_in_step0} in $\bar{\mu}$ unknowns $x_j := \langle v_j \notin W \vert \varepsilon \in \mathcal{E} \rangle$, with $1 \leq j \leq \bar{\mu}$,
\begin{equation}
    {\everymath={\displaystyle}
    \left\lbrace
    \begin{array}{rcl}
         x_1 - \frac{\gamma}{\gamma d_1-\varepsilon}\sum_{k \neq 1} A_{1k } x_k &=& \frac{\gamma \xi}{\gamma d_1-\varepsilon}\\
         &\vdots \\
         x_{\bar{\mu}}  - \frac{\gamma}{\gamma d_{\bar{\mu}}-\varepsilon}\sum_{k \neq \bar{\mu}} A_{\bar{\mu} k } x_{k}&=& \frac{\gamma\xi}{\gamma d_{\bar{\mu}}-\varepsilon} \,.
    \end{array}
    \right.
    }
    \label{eq:sys_lin_eq}
\end{equation}
We make the following ansatz on the solution
\begin{equation}
    x_1 = \ldots = x_{\bar{\mu}} = \frac{\gamma \xi}{\gamma \mu -\varepsilon}\,,
    \label{eq:ansatz_const_comp}
\end{equation}
based on the analytical solution  \eqref{eq:v_notin_eps_in_StarLike} for a particular case and on numerical evidence for general graphs, including the complete graph. Hence, focusing on the left-hand side of the $j$-th equation \eqref{eq:sys_lin_eq}, we recover the identity with the right-hand side of the same equation
\begin{eqnarray}
\frac{\gamma \xi}{\gamma \mu -\varepsilon} \left[ 1- \frac{\gamma}{\gamma d_j-\varepsilon}\sum_{k \neq j} A_{jk}\right] &=
\frac{\gamma \xi}{\gamma \mu -\varepsilon} \left[1- \frac{\gamma}{\gamma d_j-\varepsilon}(d_{j}-\mu)\right]\nonumber\\
&=\frac{\gamma \xi}{\gamma \mu -\varepsilon} \frac{\gamma \mu -\varepsilon}{\gamma d_j-\varepsilon} = \frac{\gamma \xi}{\gamma d_j-\varepsilon}\,,
\end{eqnarray}
where $\sum_{k\neq j} A_{jk}= d_j - \mu$ because the index of summation does not run over all the vertices but runs over the non-marked vertices, hence we get the degree $d_j$ lowered by the number of fully connected marked vertices, $\mu$. This identity applies to all $j=1,\ldots, N-\mu$, i.e., to all $v \notin W$. This verifies the correctness of the ansatz \eqref{eq:ansatz_const_comp} and therefore proves the Lemma.
\end{subproof}

According to the previous Lemmas, we now prove that $\mathcal{E} = \mathcal{I}$. First, we prove that $\mathcal{E} \subseteq \mathcal{I}$. Let $c:=\langle v \notin W \vert \varepsilon \in \mathcal{E}\rangle$ (Lemma \ref{lemma:v_notin_Eps_in_const}). Then, we can write any $\vert \varepsilon \rangle \in \mathcal{E}$ as
\begin{eqnarray}
    \vert \varepsilon \rangle &= \sum_v \vert v \rangle\langle v \vert \varepsilon \rangle = \sum_{w \in W} \vert w \rangle\langle w \vert \varepsilon \rangle + c \sum_{v \notin W} \vert v \rangle\nonumber\\
    &= \sum_{n=1}^\mu \vert e_n \rangle\langle e_n \vert \varepsilon \rangle + c \sqrt{N-\mu} \vert e_{\mu+1} \rangle \in \mathcal{I}\,,
\end{eqnarray}
as it is a linear combination of the basis states \eqref{eq:basis_inv_subsp_multiple}. Second, we prove that $\mathcal{I} \subseteq \mathcal{E}$. We start with the basis states $\vert e_j\rangle = \vert w_j \rangle$ for $j=1,\ldots, \mu$
\begin{equation}
    \vert e_j \rangle = \sum_{\varepsilon} \vert \varepsilon \rangle\langle \varepsilon \vert e_j \rangle
    = \sum_{\varepsilon \in \mathcal{E}} \vert \varepsilon \rangle\langle \varepsilon \vert e_j \rangle \in \mathcal{E}\,,
\end{equation}
as it is a linear combination of the Hamiltonian eigenstates $\vert \varepsilon \rangle \in \mathcal{E}$. The summation over $\varepsilon$ denotes the summation over all the Hamiltonian eigenstates. The second equality follows from $\langle w_j \vert \varepsilon \notin \mathcal{E} \rangle = 0$, by definition. The last basis state is 
\begin{eqnarray}
    \vert e_{\mu + 1} \rangle &= \frac{1}{\sqrt{N-\mu}}\sum_{\varepsilon} \sum_{v\notin W} \vert \varepsilon \rangle\langle \varepsilon \vert v \rangle\nonumber\\
    &= \frac{1}{\sqrt{N-\mu}}\left[ \sum_{\varepsilon \in \mathcal{E}} \sum_{v\notin W} \vert \varepsilon \rangle\langle \varepsilon \vert v \rangle + \sum_{\varepsilon \notin \mathcal{E}} \sum_{v\notin W} \vert \varepsilon \rangle\langle \varepsilon \vert v \rangle\right]\nonumber\\
    &= \frac{1}{\sqrt{N-\mu}}\left[ \sum_{\varepsilon \in \mathcal{E}} \sum_{v\notin W} \vert \varepsilon \rangle\langle \varepsilon \vert v \rangle +0 \right] \in \mathcal{E}\,,
\end{eqnarray}
where the last equality follows from Lemma \ref{lemma:Eps_notin_sum}.
To summarize, $\mathcal{I} = \mathcal{E}$ \eqref{eq:basis_inv_subsp_multiple}, since $\mathcal{E} \subseteq \mathcal{I}$ and $\mathcal{I} \subseteq \mathcal{E}$, and this also implies that $\dim \mathcal{E} =\dim \mathcal{I} = \mu+1$.

Now that we have the basis of the invariant subspace, we can write the reduced Hamiltonian. Given the Hamiltonian \eqref{eq:Ham_pbm_multiple}, the matrix elements of the reduced Hamiltonian for $j,k=1,\ldots,\mu$ are
\begin{eqnarray}
\langle e_j \vert  H \vert e_k \rangle  &= \gamma \left( \langle e_j \vert  D \vert e_k \rangle -\langle e_j \vert  A \vert e_k \rangle \right)+ \lambda_{w_j} \delta_{jk}\nonumber\\
&= \left[\gamma (N-1)+\lambda_{w_j}\right] \delta_{jk}-\gamma\,,
\end{eqnarray}
since $d_{w}=N-1$ and the vertices $w_j$ and $w_k$ are necessarily adjacent,
\begin{eqnarray}
\langle e_j \vert  H \vert e_{\mu+1} \rangle  &= -\gamma \langle e_j \vert  A \vert e_{\mu+1} \rangle
= \frac{-\gamma}{\sqrt{N-\mu}} \sum_{v \notin W} A_{w_j v}
= -\gamma \sqrt{N-\mu}\,,
\end{eqnarray}
since the basis is orthonormal and $A_{w_j v}=1$ $\forall v \notin W \wedge \forall w \in W$ ($w$ is fully connected).
The last element is
\begin{equation}
\langle e_{\mu+1} \vert  H \vert e_{\mu+1} \rangle  = \frac{\gamma}{N-\mu}\sum_{v,v'\notin W} L_{vv'}= \gamma \mu\,.
\end{equation}
Indeed,
\begin{equation}
\sum_{v\notin W} d_v = \sum_{v\in V} d_v-\mu(N-1)=2M-\mu(N-1)\,,
\end{equation}
and
\begin{eqnarray}
\sum_{v,v'\notin W} A_{vv'} &= \sum_{v,v' \in V}A_{vv'}-\sum_{v\in W}\sum_{v'\in V} A_{vv'}-\sum_{v\notin W}\sum_{v'\in W} A_{vv'}\nonumber\\
&= \sum_{v,v' \in V}A_{vv'}-\sum_{v\in W} d_{v}-\sum_{v\notin W}\mu\nonumber\\
&= 2M-\mu(N-1)-(N-\mu)\mu\,,
\end{eqnarray}
since $A_{vv'}=1$ $\forall v' \neq v \wedge v \in W$ and, we recall, $\mu=\vert W \vert$ and $N=\vert V \vert$. Hence, the reduced Hamiltonian \eqref{eq:Ham_pbm_multiple_red} follows.
\end{proof}

\subsection{Spatial search}
The Hamiltonian encoding the problem is
\begin{equation}
    H = \gamma L - \sum_{w \in W}\vert w \rangle \langle w \vert\,,
    \label{eq:Ham_search_multiple}
\end{equation}
where the marked vertices, the $\mu$ possible solutions of the spatial search, are the fully connected vertices $w\in W$. The oracles are unbiased, $\lambda_w = -1$ $\forall w \in W$ in Equation \eqref{eq:Ham_pbm_multiple}, as the solutions are usually assumed to be equivalent \cite{roland2002quantumsearch,wong2016spatial}. The goal is to tune the hopping amplitude $\gamma$ to maximize success probability $P_W(t) = \sum_{w \in W} P_w(t)$ after a period of time of evolution. The overall success probability $P_W$ is the sum of the probabilities at each $w \in W$ because these are equivalent solutions. Solving the problem amounts to finding one of them. The initial value is $P_W(0) = \mu/N$, since the initial state is the equal superposition of all vertices. The time evolution of $\vert \psi_0 \rangle$ is entirely contained in $\mathcal{I}$, as $\vert \psi_0 \rangle = (\sum_{j=1}^{\mu} \vert e_j \rangle +\sqrt{N-\mu} \vert e_{\mu+1} \rangle )/\sqrt{N}$ and so $\vert {\psi_0}_{\mathrm{red}} \rangle = \vert {\psi_0} \rangle$. Hence, not only the success probability $P_W(t)$, but also the entire dynamics of the system $\exp\left[-i H t\right] \vert {\psi_0} \rangle$ is the same on any simple graph $G$. 
According to Theorem \ref{th:Lctqw_pbm_multiple}, the results we have for the spatial search on the complete graph also apply to the search of $w\in W$ on other graphs. The spatial search of $\mu$ marked vertices in the complete graph is known to be optimal ($P_W=1$) for $\gamma = 1/N$ at time $t^\ast = (\pi/2)\sqrt{N/\mu}$ \cite{wong2016spatial}. We point out that in \cite{wong2016spatial} the CTQW is generated by the adjacency matrix, but this is equivalent to using the Laplacian matrix since the complete graph is regular. Hereafter we prove these results on the optimal search without assuming that the graph is complete.

Spatial search is a suitable case study to apply Remark \ref{rem:identically_evol_v}, as all the fully connected marked vertices have the same $\lambda = -1$. Therefore, the Hamiltonian \eqref{eq:Ham_search_multiple} is invariant under permutations of the vertices in $W$. This symmetry allows us to further reduce the dimensionality of the problem by grouping together such identically evolving vertices in the state $\vert \tilde{e}_1 \rangle=\sum_{w \in W}\vert w \rangle /\sqrt{\mu}$ \cite{wang2021role}. This state is the solution of the search and is the first basis state of the reduced invariant subspace. Then, it can be shown that
\begin{eqnarray}
H \vert \tilde{e}_1 \rangle  &= [\gamma(N-\mu)-1]\vert \tilde{e}_1 \rangle -\gamma \sqrt{\mu(N-\mu)} \vert \tilde{e}_2 \rangle\,,\\
H\vert \tilde{e}_2 \rangle &= -\gamma \sqrt{\mu(N-\mu)} \vert \tilde{e}_1 \rangle+ \gamma\mu\vert \tilde{e}_2 \rangle\,,
\end{eqnarray}
where $\vert \tilde{e}_2 \rangle :=\sum_{v \notin W} \vert v \rangle/\sqrt{N-\mu}$ is the second basis state. Therefore, the orthonormal states $\vert \tilde{e}_1 \rangle$ and $\vert \tilde{e}_2 \rangle$ span the invariant subspace relevant to the spatial search. The reduced Hamiltonian is
\begin{equation}
H_\mathrm{red} = \gamma
\left( \begin{array}{cc}
N-\mu-1/\gamma & -\sqrt{\mu(N-\mu)}\\
-\sqrt{\mu(N-\mu)} & \mu
\end{array} \right)\,.
\label{eq:Ham_spatial_multiple_red}
\end{equation}
For $\gamma = 1/N$, the eigenvalues are $\varepsilon_{\pm} = \pm \sqrt{\mu/N}$ and the corresponding eigenstates are
\begin{equation}
    \vert \varepsilon_\pm \rangle = \sqrt{\frac{\sqrt{N}\pm\sqrt{\mu}}{2 \sqrt{N}}}\left(\mp\frac{\sqrt{N-\mu}}{\sqrt{N}\pm\sqrt{\mu}} \vert \tilde{e}_1 \rangle + \vert \tilde{e}_2 \rangle\right)\,.
\end{equation}
The success probability
\begin{equation}
    P_W(t) = \vert \langle \tilde{e}_1 \vert e^{-i H t}  \vert \psi_0 \rangle \vert^2
    = \frac{\mu}{N} \cos^2\left(\sqrt{\frac{\mu}{N}}t\right)+\sin^2\left(\sqrt{\frac{\mu}{N}}t\right) \,,
\end{equation}
is equal to one (certainty) at time $t^\ast = (\pi/2)\sqrt{N/\mu}$. 
For $\mu = 1$ we recover the results---reduced Hamiltonian, success probability, and optimal time---for the spatial search of a single marked vertex discussed in Section \ref{subsec:spatial_search_single}.

\subsection{Quantum transport}
The non-Hermitian effective Hamiltonian encoding the problem is
\begin{equation}
    H = L - i \sum_{w \in W} \kappa_w \vert w \rangle \langle w \vert\,,
    \label{eq:Ham_transport_multiple}
\end{equation}
where the $\mu$ trapping vertices are the fully connected vertices $w\in W$ and have, in general, different trapping rates $\kappa_w\in \mathbb{R}^+$ ($\lambda_w = -i\kappa_w$ in \eqref{eq:Ham_pbm_multiple}). Accordingly, $\eta := 2 \sum_{w \in W}\kappa_w \int_0^{+\infty} \langle w \vert  \rho(t) \vert w \rangle \,dt$ \cite{rebentrost2009environment}. We assume $\vert \psi_0 \rangle=\vert v \notin W\rangle $, therefore, according to the basis states \eqref{eq:basis_inv_subsp_multiple},
\begin{equation}
\eta = \sum_{n=1}^{\mu+1} \vert \langle e_n \vert \psi_0 \rangle \vert^2 = \vert \langle e_{\mu+1} \vert \psi_0 \rangle \vert^2 = \frac{1}{N-\mu}\,.
\label{eq:transport_eff_multiple_fcv}
\end{equation}
The transport efficiency improves as the number of fully connected traps $\mu$ increases and does not depend on the trapping rates. Changing the $\kappa_w$ affects the timescale on which the trapping occurs, not $\eta$ as it is defined in the limit of infinite time. Moreover, $\tilde{\eta}(t)= 2 \sum_{w \in W}\kappa_w \int_0^{t} \langle w \vert  \rho(\tau) \vert w \rangle \,d\tau$ does not depend on the initial vertex state $\vert v \rangle $, provided that $v \notin W$. Indeed, from Equation \eqref{eq:prob_ampli_target}, the probability amplitude at $w \in W$,
\begin{eqnarray}
\langle w \vert e^{-iHt}\vert v \rangle &=\langle w \vert e^{-i H_{\mathrm{red}}t}\sum_{n=1}^{\mu+1} \vert e_n \rangle \langle e_n \vert v \rangle = \frac{1}{\sqrt{N-\mu}}\langle e_{w} \vert e^{-i H_{\mathrm{red}}t}\vert e_{\mu+1} \rangle\,,
\end{eqnarray}
is independent of $v \notin W$. For $\mu = 1$ we recover the transport efficiency for the single trapping vertex discussed in Section \ref{subsec:quantum_transport_single}.

\section{Conclusions}
\label{sec:conclusion}
In this paper we have investigated the role of the fully connected vertex $w$ in continuous-time quantum walks (CTQWs) on simple graphs $G$ of order $N$. In particular, we have analytically proved that when the dynamics of the walker is governed by the Laplacian matrix, the CTQW starting from the state $\vert w \rangle $ does not depend on the graph $G$ considered and it is therefore equivalent, e.g., to the CTQW on the complete graph of the same order, $K_N$. Instead, the corresponding adjacency CTQWs do depend on the graph considered. 

After that, we have investigated Grover-like CTQWs, i.e., systems with  Hamiltonian of the form $H = \gamma L +\sum_{w \in W}\lambda_w \vert w \rangle\langle w \vert $, where
$W$ is the subset of vertices made of $\mu$ fully connected marked vertices. Here the quantity of interest is the probability amplitude at the vertices $w \in W$.
For these systems, we have analytically proved that the probability amplitudes of interest do not depend on the graph considered. In this case,  the equivalence  concerns the dynamics relevant to the computation of the probability amplitude at $w$, whereas the full dynamics of the walkers are not necessarily equivalent.

As applications of the above results, we have considered spatial search of $w \in W$ and quantum transport to $w \in W$. These problems on a simple graph $G$ of order $N$ inherit the results already known for the corresponding problems on the complete graph $K_N$, independently of the considered graph. In particular, the spatial search of equivalent solutions (unbiased oracles) is optimal for $\gamma = 1/N$ at time $t^\ast = (\pi/2)\sqrt{N/\mu}$, and the full dynamics of the equal superposition of all vertices under the search Hamiltonian on $G$ and on $K_N$ are equivalent. Regarding quantum transport of an initially localized excitation, the transport efficiency $\eta$ increases with the number of fully connected traps as $\eta = 1/(N-\mu)$, and does not depend on the initial vertex state $\vert v \notin W\rangle$. 

Our proofs are based on the notion of Krylov subspaces. We have determined the invariant subspace relevant to the considered Laplacian problems,  and the corresponding reduced Hamiltonian, thus reducing the dimensionality of the original problem. Whenever a fully connected vertex is the initial state of the CTQW or a marked vertex of a Grover-like CTQW, results do not depend on the graph considered. Hence, the universality of the fully connected vertex.

One of most relevant consequences of our work is that the spatial search of fully connected vertices is always optimal and does not depend on the full topology of the involved graph. We can always find the solution with certainty and we know the parameters, $\gamma$ and time, to achieve this result. This can be exploited, e.g., in finding the fully connected hubs of a network. Indeed, most often the hub is not connected to all the nodes, but serves as the center of star-shaped subnetwork \cite{sakarya2020hybrid} and our results hold when applied to the subnetwork. More generally, our results provide a coherent and unified framework 
to understand and extend several partial results already reported in literature for fully connected vertices, and pave the way for further development in the area, e.g., understanding whether universality survives in the presence of chirality \cite{chi1,chi2}.

\ack
Work done under the auspices of GNFM-INdAM. The authors thank Claudia Benedetti and Massimo Frigerio for helpful discussions.

\section*{References}
\bibliographystyle{iopart-num-mod}
\bibliography{biblio_univ_fcv_Lctqw_pbm}

\end{document}